\newcommand{\rmd}{\mathrm{d}}
\newcommand{\rmi}{\mathrm{i}}
\newcommand{\rme}{\mathrm{e}}
\newcommand{\rmBloch}{\mathrm{Bloch}}
\newcommand{\dist}{\operatorname{dist}}
\newcommand{\Dom}{\operatorname{Dom}}
\newcommand\tsup[2][2]{%
 \def\useanchorwidth{T}%
  \ifnum#1>1%
    \stackon[-.5pt]{\tsup[\numexpr#1-1\relax]{#2}}{\scriptscriptstyle\sim}%
  \else%
    \stackon[.5pt]{#2}{\scriptscriptstyle\sim}%
  \fi%
}
\newtheorem{thm}{Theorem}
\newtheorem{proposition}[thm]{Proposition}
\newtheorem{lemma}[thm]{Lemma}
\newcommand{\hilb}{\mathcal{H}}
\newcommand{\ii}{\mathrm{i}}
\newcommand{\e}{\mathrm{e}}
\definecolor{cblue}{rgb}{0.16, 0.32, 0.75}
\definecolor{cred}{rgb}{0.7, 0.11, 0.11}
\begin{document}
\title{
Robust quantification of spectral transitions in perturbed quantum systems
}
\date{\today}
\author{Zsolt Szab\'o}
\email{zsolt.szabo@students.mq.edu.au}
\affiliation{School of Mathematical and Physical Sciences, Macquarie University, NSW 2109, Australia}
\affiliation{ARC Centre of Excellence for Engineered Quantum Systems, Macquarie University, NSW 2109, Australia}
\author{Stefan Gehr}
\affiliation{Department Physik, Friedrich-Alexander-Universit\"at Erlangen-N\"urnberg, Staudtstra\ss e 7, 91058 Erlangen, Germany}
\author{Paolo Facchi}
\affiliation{Dipartimento di Fisica, Universit\`a di Bari, I-70126 Bari, Italy}
\affiliation{INFN, Sezione di Bari, I-70126 Bari, Italy}
\author{Kazuya Yuasa}
\affiliation{Department of Physics, Waseda University, Tokyo 169-8555, Japan}
\author{Daniel Burgarth}
\affiliation{Department Physik, Friedrich-Alexander-Universit\"at Erlangen-N\"urnberg, Staudtstra\ss e 7, 91058 Erlangen, Germany}
\author{Davide Lonigro}
\affiliation{Department Physik, Friedrich-Alexander-Universit\"at Erlangen-N\"urnberg, Staudtstra\ss e 7, 91058 Erlangen, Germany}

\begin{abstract}
A quantum system subject to an external perturbation can experience leakage between uncoupled regions of its energy spectrum separated by a gap. To quantify this phenomenon, we present two complementary results. First, we establish time-independent bounds on the distances between the true dynamics and the dynamics generated by block-diagonal effective evolutions constructed via the Schrieffer--Wolff and Bloch methods. Second, we prove that, under the right conditions, this leakage remains small \textit{eternally}. That is, we derive a time-independent bound on the leakage itself, expressed in terms of the spectral gap of the unperturbed Hamiltonian and the norm of the perturbation, ensuring its validity for arbitrarily large times. Our approach only requires a finite spectral gap, thus accommodating continuous and unbounded spectra. Finally, we apply our bounds to specific systems of practical interest.  
\end{abstract}

\maketitle

\section{Introduction}\label{ch.intro}
Quantum systems governed by gapped Hamiltonians have a natural protection against transitions out of low-energy subspaces when perturbed. This property is important in areas such as adiabatic quantum computing~\cite{Farhi2000,albash_adiabatic_2018}, quantum error correction~\cite{kitaev_fault-tolerant_2003,chesi_thermodynamic_2010}, and condensed matter physics, where controlling or bounding undesired population transfer out of a given eigenspace is key to maintaining coherent dynamics. In practical scenarios, one needs rigorous and quantitative estimates to guarantee that small perturbations applied to the system do not induce large transitions at any time. In particular, it is desirable to obtain \textit{eternal} estimates, that is, valid for arbitrarily large times.

One technique for investigating these bounded transitions relies on effective Hamiltonian theory, developed to capture the essential low-energy physics of a complex system by integrating out high-energy degrees of freedom, yielding a simplified yet accurate description of its relevant dynamics. The Schrieffer--Wolff transformation, named after the authors of Ref.~\cite{schrieffer_relation_1966}, accomplishes subspace decoupling through a unitary transformation that is typically determined recursively. A self-consistent review of the method with examples was authored by Bravyi \textit{et~al.}~\cite{bravyi_schriefferwolff_2011}. However, the existing literature does not generally address the error involved in such approximations---namely, how closely the unitary dynamics generated by the effective Hamiltonian aligns with the true system evolution in a general context. One goal of this manuscript is to remedy this shortcoming.

\begin{figure*}[] 
     \centering
     \includegraphics[width=0.9\textwidth]{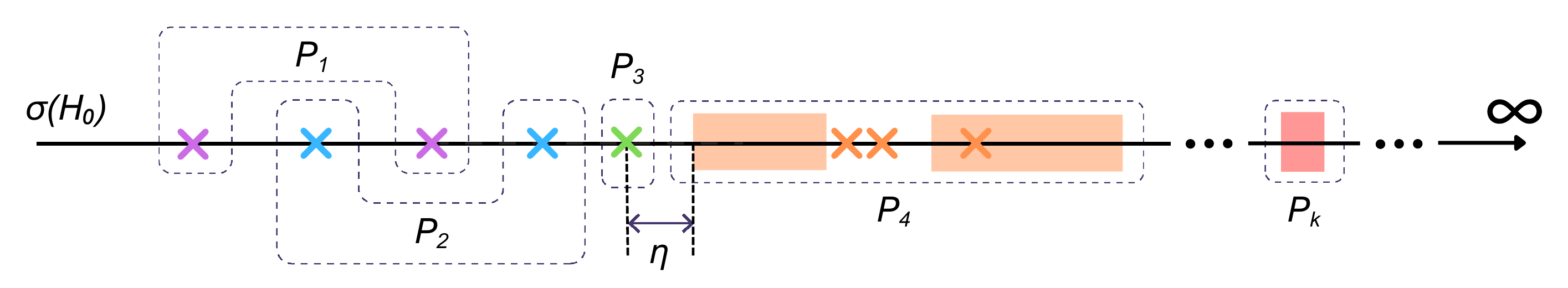}
     \caption{Example of the coarse-graining of a Hamiltonian with generic spectrum. The spectral gap $\eta$ is the minimal distance between the coarse-grained spectral components. }
     \label{fig:theFigure}
\end{figure*}

Another effective method, originally introduced by Claude Bloch~\cite{bloch_sur_1958}, relies on finding an effective Hamiltonian that, in the limit in which the relative strength $\gamma$ of the drift Hamiltonian with respect to the perturbation becomes large, does not provoke transitions between distinct components of the spectrum of the unperturbed system and, at the same time, reproduces the dynamics of the actual system. This can be obtained by solving a system of operator equations---the Bloch equations. 
Lindgren~\cite{lindgren_rayleigh-schrodinger_1974}, Durand~\cite{durand_direct_1983}, and later Jolicard~\cite{jolicard_effective_1987} described recursive approaches to solving the Bloch equations, while Killingbeck and Jolicard~\cite{killingbeck_bloch_2003} offered a detailed summary on how a suitably defined Bloch wave operator block-diagonalizes the full Hamiltonian with respect to the unperturbed drift. Their discussion, however, does not allow one to quantify how off-diagonal leakage terms scale with the perturbation strength for large times, nor do they look at systems with unbounded Hamiltonians.

More recently, Burgarth \textit{et~al.}~\cite{eternal} investigated the phenomenon of eternal adiabaticity, showing how one can derive time-independent bounds on the leakage for autonomous generators with a finite spectral gap. Similarly, Facchi \textit{et~al.}~\cite{facchi_robustness_2024} discussed the large-time stability of quantum symmetries.
In both cases, these results are obtained for systems with discrete energy spectra and are formulated in a \textit{fine-grained} setup, that is, each eigenspace is considered separately, and the spectral gap $\eta$ characterizes well-defined energy-level separations between individual eigenvalues.

On the other hand, in applications, one is often interested in the \textit{coarse-grained} leakage between distinct components of the spectrum---each component possibly consisting of multiple eigenvalues, or even containing continuous spectrum. This is the case, for instance, in solid-state physics, where one is usually led to consider quantum systems whose spectra consist of continuous energy bands, and only transitions between energies pertaining to distinct bands are of interest. The aforementioned results do not cover this scenario.

This work will address this problem and provide an eternal upper bound to the leakage between distinct components of the unperturbed energy spectrum. We will work in the infinite-dimensional setting, which will allow us to handle possibly continuous spectra without the need for any limiting procedure. Our results are robust, in the sense that they are independent of the specifics of the unperturbed energy spectrum of the system: only the value of the coarse-grained spectral gap enters the bound, which is otherwise independent of the number of such components as well as the dimension of the vector subspace corresponding to each component of the spectrum (which might as well be infinite). Lastly, unbounded energy spectra are also covered, as long as the norm of the perturbation stays finite.
To achieve this result, we will first adapt the Bloch wave operator and the Schrieffer--Wolff transformation methods to a more general coarse-grained spectral decomposition, and obtain a single, uniform upper bound on the difference between the effective and actual dynamics.

The remainder of this paper is organized as follows. In Sec.~\ref{sec:setting}, we gather the necessary notation and definitions. In Sec.~\ref{sec:bloch}, we describe the Bloch method for coarse-grained spectral decomposition, and apply it to derive an explicit expression for the Bloch generator of an effective dynamics which preserves the block-diagonal structure of the unperturbed spectrum. In Sec.~\ref{sec:sw}, we find a Hermitian effective generator by means of a Schrieffer--Wolff transformation. In Sec.~\ref{sec:unbounded}, we use these results to find an eternal bound on the leakage which is also rigorously shown to also hold in cases where the unperturbed Hamiltonian has an unbounded energy spectrum. Lastly, in Sec.~\ref{sec:examples}, we apply our bound to specific examples of practical interest, and in Sec.~\ref{sec:conclusion}, we provide concluding remarks and highlight potential further extensions to our results.

\section{Setting and Notation}\label{sec:setting}
We shall consider a quantum system described by a Hilbert space $\hilb$, whose energy is associated with a self-adjoint operator $H(\gamma)$ on this space---the Hamiltonian of the system---in the form
\begin{equation}\label{eq:hamiltonian}
    H(\gamma)=\gamma H_0+V,
\end{equation} 
with $H_0$ representing the unperturbed Hamiltonian of the system, $V$ being a perturbation, and $\gamma>0$ controls the strength of the unperturbed part. In order to accommodate situations in which the spectrum of $H_0$ admits continuous components, we shall take $\hilb$ as a possibly infinite-dimensional Hilbert space. For now, we shall restrict ourselves to the situation in which both operators $H_0$ and $V$ are bounded, $H_0,V\in\mathcal{B}(\hilb)$, postponing the extension of our results to unbounded $H_0$ to Sec.~\ref{sec:unbounded}\@.

Let $\sigma(H_0)$ be the spectrum of $H_0$. We shall take a decomposition of $\sigma(H_0)$ into finitely many disjoint components,
\begin{equation}\label{eq:coarse}
    \sigma(H_0)=\bigcup_{k=1}^m\sigma_k, \qquad \sigma_k\cap\sigma_l=\emptyset\quad(k\neq l),
\end{equation}
with the quantity
\begin{equation}
    \eta=\min_{k\neq l}\dist(\sigma_k,\sigma_l)>0
    \label{eq:SpectralGap}
\end{equation}
representing the \textit{spectral gap} of the unperturbed Hamiltonian $H_0$. This induces a decomposition of the Hilbert space of the system into mutually orthogonal components,
\begin{equation}\label{eq:decomposition}
    \hilb=\bigoplus_{k=1}^m\hilb_k,\qquad \hilb_k=P_k\hilb,
\end{equation}
where $P_k$ is the orthonormal projection corresponding to the $k$th component of the spectrum. The set of projections $\{P_k\}_{k=1,\dots,m}$ form a complete orthonormal family of projections on $\hilb$, i.e.
\begin{equation}
    P_k=P_k^\dag,\quad P_kP_l=P_k\delta_{kl},\quad \sum_kP_k=\openone,
\end{equation}
which commute with $H_0$,
\begin{equation}
    [P_k,H_0]=0,
\end{equation}
or equivalently, such that $H_0$ is block-diagonal with respect to the decomposition~\eqref{eq:decomposition} of the Hilbert space of the system. As such, so is the evolution group $\rme^{-\ii tH_0}$ generated by $H_0$: any vector $\psi\in\hilb_k$ for some $k=1,\dots,m$ will only evolve inside the space $\hilb_k$. It must be remarked that as each $P_k$ projects onto a Hilbert space with dimension possibly higher than one---in fact, possibly infinite---the decomposition~\eqref{eq:decomposition} of $\hilb$ could be interpreted as a \textit{coarse-grained} decomposition of the Hilbert space. Figure \ref{fig:theFigure} illustrates a potential coarse-graining of a Hilbert space based on the spectrum of a generic Hamiltonian.

When considering a perturbed Hamiltonian $H(\gamma)=\gamma H_0+V$ with a generic Hermitian perturbation $V$, the block-diagonal structure of the unperturbed Hamiltonian $H_0$ is generally spoiled---\textit{leakage} between distinct spaces $\hilb_k$ will generally occur. The leakage out of the $k$th eigenspace at time $t$ can be quantified by
\begin{equation}\label{eq:leakage}
    \mathcal{L}_k(t;\gamma)=\|Q_k\e^{-\ii tH(\gamma)}P_k\|,
\end{equation}
where $Q_k=\sum_{l\neq k}P_l=\openone-P_k$ is the projection onto the orthogonal complement of the $k$th component of the Hilbert space, and $\| X\|$ is the operator norm of $X$\@.

Following Refs.~\cite{Soliverez1981,eternal}, our goal will be to construct an effective adiabatic generator $H_\mathrm{eff}(\gamma)$ satisfying the following three requirements.
\begin{enumerate}
    \item \label{req1}
    $H_\mathrm{eff}(\gamma)$ must be \textit{block-diagonal} with respect to the projections $P_k$, that is, $[P_k,H_\mathrm{eff}(\gamma)]=0$ for all $k$, thus causing no leakage between separate spectral subspaces;
    \item  \label{req2}
    $H_\mathrm{eff}(\gamma)$ should be \textit{similar} (and thus \textit{isospectral}) to the original Hamiltonian $H(\gamma)$, that is, $H_\mathrm{eff}(\gamma)=\Omega^{-1}(\gamma) H(\gamma)\Omega(\gamma)$ for some bounded invertible transformation $\Omega(\gamma)$ with bounded inverse;
    \item  \label{req3}
    The similarity transformation $\Omega(\gamma)$ should be close to the identity, $\|\Omega(\gamma)-\openone\| =\mathcal{O}(\gamma^{-1})$, for large $\gamma$.
\end{enumerate}
If such an operator $H_\mathrm{eff}(\gamma)$ is found, the leakage out of the $k$th component of the Hilbert space can be equivalently quantified as
\begin{equation}\label{eq:leakage_bis}
    \mathcal{L}_k(t;\gamma)=\|Q_k(\e^{-\ii tH(\gamma)}-\e^{-\ii tH_\mathrm{eff}(\gamma)})P_k\|,
\end{equation}
as $H_\mathrm{eff}(\gamma)$, being block-diagonal, satisfies $Q_k\e^{-\ii tH_\mathrm{eff}(\gamma)}P_k=0$. Since the operator norm is submultiplicative, this quantity is bounded above by 
\begin{align}\label{eq:bounding_the_leakage}
    \mathcal{L}_k(t;\gamma)&\leq\|Q_k\|\|\e^{-\ii tH(\gamma)}-\e^{-\ii tH_\mathrm{eff}(\gamma)}\|\|P_k\|\nonumber\\
    &=\|\e^{-\ii tH(\gamma)}-\e^{-\ii tH_\mathrm{eff}(\gamma)}\|,
\end{align}
since $\|P_k\|=\|Q_k\|=1$. Therefore, the distance between the evolutions generated by the actual Hamiltonian $H$ and any effective generator $H_\mathrm{eff}(\gamma)$ serves as a uniform bound for the leakage out of \emph{any} subspace $\mathcal{H}_k=P_k\mathcal{H}$.

To avoid a cumbersome notation, we will hereafter make the $\gamma$-dependence of operators and scalars implicit, and simply write $H(\gamma)\equiv H$, $\mathcal{L}_k(t;\gamma) \equiv \mathcal{L}_k(t)$, and so on. Table~\ref{tab:gammadependence} serves as a reference to the implicit $\gamma$-dependence of all objects introduced in this manuscript. 
\begin{table}[ht]
\caption{List of all relevant operators and scalar quantities introduced throughout the text, according to whether they depend on the value of $\gamma$ or not.}
\centering
    \begin{tabular}{|c|c|}
        \hline 
        \textbf{Independent of $\gamma$} & \textbf{Implicitly $\gamma$-dependent} \\
        \hline
        $H_0$, $V$, $P_k$, $Q_k$ & $H$, $H_{\text{eff}}$, $H_{\rm Bloch}$, $H_{\rm SW}$ \\
        $t$, $\eta$ & $\mathcal{D}_{\rm Bloch}$, $\mathcal{D}_{\rm SW}$, $\mathcal{L}_k(t)$, $\delta$, $\varepsilon$ \\
        $\Omega^{(j)}$, $\Omega^{(j)}_k$, $Y^{(j)}$, $Y^{(j)}_k$  &  $\Omega$, $\Omega_k$, $W$, $W_k$   \\
        $H_0^{(n)}$, $\eta_{n}$ & $H^{(n)}$, $H_{\rm Bloch}^{(n)}$, $\varepsilon_{n}$ \\
        \hline
    \end{tabular}  \label{tab:gammadependence}
\end{table}

In the following sections, we will explicitly provide two effective choices for the generator $H_\mathrm{eff}$, one obtained through an adaptation of the Bloch method to coarse-grained spectral decompositions (Sec.~\ref{sec:bloch}), and one derived from the former through a Schrieffer--Wolff transformation (Sec.~\ref{sec:sw}). This will also enable us to find a time-independent bound on the leakage $\mathcal{L}_k(t)$, which can be even extended to the case in which $H_0$ is unbounded (Sec.~\ref{sec:unbounded}).

\section{Effective Dynamics via the Bloch Method}\label{sec:bloch}
\subsection{Adiabatic Bloch Generator}
In order to construct an effective generator $H_\mathrm{eff}$ having the desired properties, we will pursue the following strategy. We will search for a family of bounded operators $\Omega_1,\dots,\Omega_m\in\mathcal{B}(\hilb)$ satisfying the \emph{Bloch equations}~\cite{durand_direct_1983},
\begin{gather}\label{eq:bloch_eq_1}
H\Omega_k=\Omega_kH\Omega_k, \\\label{eq:bloch_eq_2}
\Omega_kP_k=\Omega_k,\\\label{eq:bloch_eq_3}
P_k\Omega_k=P_k.
\end{gather}
Assuming that such a solution is found, one defines the \textit{adiabatic Bloch generator} $H_\mathrm{Bloch}$ via
\begin{equation}
    H_\mathrm{Bloch}=\sum_kP_kH\Omega_k,
\end{equation}
and the \textit{Bloch wave operator} $\Omega$ via
\begin{equation}
    \Omega=\sum_k\Omega_k.
\end{equation}
The adiabatic Bloch generator $H_\mathrm{Bloch}$ constructed in this way satisfies the following properties:
\begin{proposition}\label{prop:diagonalizes}
    Let $\{\Omega_k\}_{k=1,\dots,m}\subset\mathcal{B}(\hilb)$ be a solution of the Bloch equations~\eqref{eq:bloch_eq_1}--\eqref{eq:bloch_eq_3}. Then, the adiabatic Bloch generator $H_\mathrm{Bloch}$ satisfies
    \begin{equation}\label{eq:h_to_hbloch}
        H\Omega=\Omega H_\mathrm{Bloch},
    \end{equation}
    and it is block-diagonal with respect to the projections $\{P_k\}_{k=1,\dots,m}$,
    \begin{equation}\label{eq:bloch_is_block}
        H_\mathrm{Bloch}P_k=P_kH_\mathrm{Bloch}.
    \end{equation}
\end{proposition}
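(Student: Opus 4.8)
The plan is to prove both claims by direct operator algebra, relying on two immediate consequences of the Bloch equations. The first is the ``quasi-orthogonality'' identity $\Omega_l P_k=\delta_{lk}\Omega_l$, obtained by inserting $\Omega_l=\Omega_l P_l$ (Eq.~\eqref{eq:bloch_eq_2} with $k\to l$) and using $P_l P_k=\delta_{lk}P_k$. The second is the self-consistency identity $\Omega_k H\Omega_k=H\Omega_k$, which is nothing but a rewriting of Eq.~\eqref{eq:bloch_eq_1}.

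For the intertwining relation~\eqref{eq:h_to_hbloch}, I would expand
$\Omega H_\mathrm{Bloch}=\bigl(\sum_l\Omega_l\bigr)\bigl(\sum_k P_kH\Omega_k\bigr)=\sum_{k,l}\Omega_lP_kH\Omega_k$.
Applying $\Omega_lP_k=\delta_{lk}\Omega_l$ collapses the double sum to $\sum_k\Omega_kH\Omega_k$, and the self-consistency identity then turns each term into $H\Omega_k$, giving $\Omega H_\mathrm{Bloch}=H\sum_k\Omega_k=H\Omega$.

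For block-diagonality~\eqref{eq:bloch_is_block}, I would compute the two sides separately. On one hand $H_\mathrm{Bloch}P_k=\sum_l P_lH\Omega_lP_k=P_kH\Omega_k$, again by $\Omega_lP_k=\delta_{lk}\Omega_l$. On the other hand $P_kH_\mathrm{Bloch}=\sum_l P_kP_lH\Omega_l=P_kH\Omega_k$ since $P_kP_l=\delta_{kl}P_k$. Hence both equal $P_kH\Omega_k$, so $[P_k,H_\mathrm{Bloch}]=0$.

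There is essentially no serious obstacle: the argument is a few lines of bookkeeping the index contractions in the double sums, the one point of care being to remember that $\Omega_l$ absorbs $P_l$ on the right but not $P_k$ for $k\neq l$, and not to assume $\Omega$ is invertible or bounded below (needed for Requirement~\ref{req2} but not here). It is also worth noting that the proof uses only Eqs.~\eqref{eq:bloch_eq_1} and~\eqref{eq:bloch_eq_2}; the normalization $P_k\Omega_k=P_k$ of Eq.~\eqref{eq:bloch_eq_3} plays no role in this proposition and will only become relevant later, when controlling $\|\Omega-\openone\|$.
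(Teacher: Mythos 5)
Your proof is correct and follows essentially the same route as the paper's: both rest on the identities $\Omega_l P_k=\delta_{lk}\Omega_l$ and $\Omega_kH\Omega_k=H\Omega_k$, the only cosmetic difference being that the paper first establishes the per-$k$ intertwining $H\Omega_k=\Omega_kH_{\mathrm{Bloch}}$ before summing over $k$, while you collapse the double sum directly. Your side observation that Eq.~\eqref{eq:bloch_eq_3} plays no role in this proposition is also accurate.
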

\begin{proof}
    For every $k$, one has
    \begin{equation}\label{eq:bloch_eq_4}
        P_kH_\mathrm{Bloch}=\sum_{l}P_kP_lH\Omega_l=P_kH\Omega_k,
    \end{equation}
    where we used the orthogonality of the projections $P_k$ and $P_l$ for $k\neq l$. Therefore,
    \begin{equation} \label{eq:bloch_eq18}
        H\Omega_k=\Omega_kH\Omega_k=\Omega_kP_kH\Omega_k=\Omega_kH_\mathrm{Bloch},
    \end{equation}
    where we used Eqs.~\eqref{eq:bloch_eq_1},~\eqref{eq:bloch_eq_2}, and~\eqref{eq:bloch_eq_4}. Summing this equation over $k$ yields Eq.~\eqref{eq:h_to_hbloch}.

    Furthermore, using Eq.~\eqref{eq:bloch_eq_2}, one has
    \begin{equation}\label{eq:bloch_eq_5}
        H_\mathrm{Bloch}P_k=\sum_lP_lH\Omega_lP_k=P_kH\Omega_k.
    \end{equation}
    Equations~\eqref{eq:bloch_eq_4} and~\eqref{eq:bloch_eq_5} imply the commutativity~\eqref{eq:bloch_is_block}.
\end{proof}

In general, the system of equations~\eqref{eq:bloch_eq_1}--\eqref{eq:bloch_eq_3} may admit multiple solutions. We will use this freedom to our advantage, and construct a solution which satisfies requirements \ref{req1}--\ref{req3} listed in Sec.~\ref{sec:setting} to be regarded as a good effective generator. By Proposition~\ref{prop:diagonalizes}, requirement~\ref{req1} is automatically satisfied by any generator constructed in this way. For the remaining two requirements, we need to ensure the following properties, at least for sufficiently large $\gamma$:
\begin{itemize}
    \item The Bloch operator $\Omega$ must be invertible, so that the relation $H\Omega=\Omega H_{\rm Bloch}$ implies $H_{\rm Bloch}=\Omega^{-1}H\Omega$, and thus similarity;
    \item The Bloch operator must satisfy
\begin{equation}
    \|\Omega-\openone\|=\mathcal{O}(\gamma^{-1}).
\end{equation}
\end{itemize}

\subsection{Perturbative Solution of the Bloch Equations}
In order to find a solution to the Bloch equations~\eqref{eq:bloch_eq_1}--\eqref{eq:bloch_eq_3} with the desired properties, we will use a perturbative Ansatz for the solutions---namely, for every $k=1,\dots,m$, we will search for a solution in the form
\begin{equation}\label{eq:ansatz}
    \Omega_k=\sum_{j=0}^\infty\frac{1}{\gamma^j}\Omega_k^{(j)}
\end{equation}
for some sequence of operators $\{\Omega_k^{(j)}\}_{j\in\mathbb{N}}\subset\mathcal{B}(\hilb)$.
Of course, once such a solution is found, the convergence of this series must be checked a posteriori.

Plugging the Ansatz~\eqref{eq:ansatz} into Eq.~\eqref{eq:bloch_eq_1} together with the definition $H=\gamma H_0+V$, a straightforward order-by-order substitution shows that the Bloch equations are reduced in the following form: for every $k=1,\dots,m$, Eq.~\eqref{eq:bloch_eq_1} becomes
\begin{equation} \label{eq:bloch-commutator-equ0}
\begin{cases}
\medskip
\displaystyle
[H_0,\Omega_k^{(0)}]=0&(j=0),\\
\displaystyle
[H_0,\Omega_k^{(j)}]=-V\Omega_k^{(j-1)}+\sum_{i=0}^{j-1}\Omega_k^{(i)}V\Omega_k^{(j-1-i)}&(j \geq 1).
\end{cases}
\end{equation}
Equations~\eqref{eq:bloch_eq_2} and~\eqref{eq:bloch_eq_3} are rephrased as
\begin{equation}\label{eq:bloch_eq_2_bis}
\Omega_k^{(j)}P_k=\Omega_k^{(j)}\quad(j\geq 0)
\end{equation}
and
\begin{equation}\label{eq:bloch_eq_3_bis}
\begin{cases}
\medskip
\displaystyle
P_k\Omega_k^{(0)}=P_k&
(j=0),\\
\displaystyle
P_k\Omega_k^{(j)}=0 \ &
(j\geq 1),
\end{cases}
\end{equation}
respectively.

Taking a look at the system~\eqref{eq:bloch-commutator-equ0}, we notice that, for every $j\in\mathbb{N}$, the $j$th equation depends on the solutions of the previous $j-1$ equations. As such, the solutions can be found recursively. To this extent, we notice that each of the equations in Eq.~\eqref{eq:bloch-commutator-equ0} is in the form
\begin{equation}\label{eq:sylvester}
    AX-XB=Y,
\end{equation}
with $A$, $B$, and $Y$ being bounded operators on $\hilb$. This is known as a \textit{Sylvester equation} in the literature. This equation can be solved explicitly: see Ref.~\cite[Theorem~VII.2.5]{bhatia_matrix_2013} and Lemma~\ref{lemma:sylvester} in Appendix~\ref{sec:app0}\@. As such, we can find an explicit recursive expression for the perturbative solution of the Bloch equations~\eqref{eq:bloch-commutator-equ0}--\eqref{eq:bloch_eq_3_bis}.
\begin{proposition}\label{prop:perturbative}
Let the bounded operators $\Omega_k^{(j)}$, for $k=1,\dots,m$ and $j\in\mathbb{N}$, be given by
\begin{align}
\label{eq:solution_12}
\begin{cases}
    \medskip
    \displaystyle
    \Omega_k^{(0)}=P_k&(j=0),\\
    \displaystyle
    \Omega_k^{(j)}
    =\int_{-\infty}^\infty\rme^{-\rmi tH_0}Q_k Y_k^{(j)}P_k\rme^{\rmi tH_0}f(t)\,\rmd t&(j\geq 1),
\end{cases}
\end{align}
where
\begin{equation}
Y_k^{(j)}
=
-V\Omega_k^{(j-1)}+\sum_{i=1}^{j-1}\Omega_k^{(i)}V\Omega_k^{(j-1-i)}\quad(j\geq 1),
\end{equation}
and $f(t)$ is any function in $L^1(\mathbb{R})$ such that 
\begin{equation}\label{eq:f(t)}
\hat{f}(s)=\int_{-\infty}^\infty f(t)\rme^{-\rmi st}\,\rmd t=\frac{1}{s}\quad\text{for}\quad |s|\ge\eta,
\end{equation}
with $\eta$ being the spectral gap of $H_0$ defined in Eq.~\eqref{eq:SpectralGap}. Then, $\Omega_k^{(j)}$ solve the Bloch equations~\eqref{eq:bloch-commutator-equ0}--\eqref{eq:bloch_eq_3_bis}.
\end{proposition}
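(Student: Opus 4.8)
The plan is to argue by strong induction on the order $j$, proving at each step that the operator $\Omega_k^{(j)}$ defined by Eq.~\eqref{eq:solution_12} is a well-defined bounded operator on $\hilb$, that it obeys the projector constraints~\eqref{eq:bloch_eq_2_bis} and~\eqref{eq:bloch_eq_3_bis}, and that it solves the commutator equation~\eqref{eq:bloch-commutator-equ0}. The base case $j=0$ is trivial, since $\Omega_k^{(0)}=P_k$ commutes with $H_0$ and satisfies $\Omega_k^{(0)}P_k=P_k\Omega_k^{(0)}=P_k$.

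For the inductive step, assume the three claims hold at all orders $0,\dots,j-1$. The first thing I would record is the \emph{block structure} of the lower-order terms implied by the induction hypothesis: $\Omega_k^{(i)}=Q_k\Omega_k^{(i)}P_k$ for $1\le i\le j-1$, and $\Omega_k^{(i)}P_k=\Omega_k^{(i)}$ for all $i\ge0$. From this, boundedness of $\Omega_k^{(j)}$ is immediate: $Y_k^{(j)}$ is a finite sum of products of the bounded operator $V$ with lower-order (bounded) $\Omega$'s, the integrand in~\eqref{eq:solution_12} is norm-continuous in $t$ and dominated by $\|Q_kY_k^{(j)}P_k\|\,|f(t)|$ with $f\in L^1(\mathbb{R})$, so the integral converges and $\|\Omega_k^{(j)}\|\le\|f\|_{L^1}\|Y_k^{(j)}\|<\infty$. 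The projector constraints are equally quick: since $[P_k,H_0]=0$, the factors $\rme^{\pm\rmi tH_0}$ commute through $P_k$ and $Q_k$, so the $P_k$ to the right of the integrand yields $\Omega_k^{(j)}P_k=\Omega_k^{(j)}$, while the $Q_k$ to the left, together with $P_kQ_k=0$, yields $P_k\Omega_k^{(j)}=0$.

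The substantive point is matching the \emph{projected} source $Q_kY_k^{(j)}P_k$ appearing in~\eqref{eq:solution_12} with the \emph{full} right-hand side $R_k^{(j)}:=-V\Omega_k^{(j-1)}+\sum_{i=0}^{j-1}\Omega_k^{(i)}V\Omega_k^{(j-1-i)}$ of~\eqref{eq:bloch-commutator-equ0}. I would split off the $i=0$ term, use $\Omega_k^{(0)}=P_k$ and $P_k\Omega_k^{(i)}=0$ for $i\ge1$ to obtain $P_kR_k^{(j)}=0$, and $\Omega_k^{(i)}P_k=\Omega_k^{(i)}$ to obtain $R_k^{(j)}P_k=R_k^{(j)}$; hence $R_k^{(j)}=Q_kR_k^{(j)}P_k$. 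A short parallel computation, using the block structure once more, shows that $Q_kY_k^{(j)}P_k$ equals the same expression, so $Q_kY_k^{(j)}P_k=R_k^{(j)}$. With this identity in hand, Lemma~\ref{lemma:sylvester} closes the commutator equation: since $Q_kY_k^{(j)}P_k$ is supported, in the joint spectral decomposition of $H_0$, only on pairs of spectral points whose separation is at least the gap $\eta$, where $\hat f(s)=1/s$ by~\eqref{eq:f(t)}, the integral in~\eqref{eq:solution_12} is precisely the solution $X$ of $[H_0,X]=Q_kY_k^{(j)}P_k=R_k^{(j)}$, which is~\eqref{eq:bloch-commutator-equ0}. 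This completes the induction.

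I expect the main obstacle — conceptual rather than computational — to be the spectral-support step underlying the use of Lemma~\ref{lemma:sylvester}: the inversion $\hat f(s)=1/s$ is valid only for $|s|\ge\eta$, so one must be certain that the source fed to the integral lies solely in the off-block sectors $Q_k(\cdot)P_k$, where the joint spectral separation is at least $\eta$; otherwise $[H_0,\cdot]$ applied to the integral would return only the off-block part of the source, and the recursion would not close. It is precisely the block-structure bookkeeping of the previous paragraph that guarantees this, so the two halves of the argument interlock; accordingly I would present the block-structure observations first and derive both the projector identities and the Sylvester step from them.
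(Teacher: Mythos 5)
Your proposal is correct and follows essentially the same route as the paper's proof: both reduce the order-$j$ commutator equation to a Sylvester equation with source $Q_k Y_k^{(j)} P_k$ by exploiting the block structure $\Omega_k^{(i)}=Q_k\Omega_k^{(i)}P_k$ for $i\ge 1$ together with $\Omega_k^{(0)}=P_k$ (which absorbs the $i=0$ term of the sum into $-Q_kV\Omega_k^{(j-1)}$), and then invoke Lemma~\ref{lemma:sylvester}. The only difference is presentational: you verify the given formula by strong induction, whereas the paper phrases the same manipulations as a derivation of the formula from the equations.
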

We refer to Appendix~\ref{sec:app1} for the proof. This gives us a closed, easily computable expression for the series expansion of the operators $\Omega_k$ in Eq.~\eqref{eq:ansatz}, and therefore, for the Bloch wave operator $\Omega=\sum_k\Omega_k$, which we wanted to find. As discussed previously, we expect the series to converge and their sum to satisfy the desired property $\|\Omega-\openone\|\ll1$ for $\gamma\gg1$. This is indeed the case.
\begin{proposition}\label{prop:close_to_id}
    Assume that the parameter $\gamma$ in the Hamiltonian $H=\gamma H_0+V$ satisfies
    \begin{equation}
        \gamma>4\pi\frac{\|V\|}{\eta},
    \end{equation}
    with $\eta$ being the spectral gap~\eqref{eq:SpectralGap} of $H_0$. Then, the series~\eqref{eq:ansatz}, whose terms are the solutions~\eqref{eq:solution_12} of the Bloch equations,
    converges to a bounded operator $\Omega_k$ for all $k=1,\dots,m$.
    Moreover, the Bloch wave operator $\Omega=\sum_{k=1}^m \Omega_k$  satisfies
    \begin{equation}\label{eq:close_to_id}
        \|\Omega- \emph{\openone}\|\leq \delta \!\left(\frac{\|V\|}{\gamma\eta}\right) <1,
    \end{equation}
    where 
    \begin{equation}
    \label{eq:delta}
        \delta(x)=\frac{(1-\sqrt{1-4\pi x})^2}{4\pi x}.
    \end{equation}
\end{proposition}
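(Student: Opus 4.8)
The plan is to estimate the operators $\Omega_k^{(j)}$ term by term and to show that the series $\sum_j \gamma^{-j}\Omega_k^{(j)}$ is dominated by a scalar power series whose sum is exactly $\delta\!\left(\|V\|/(\gamma\eta)\right)$. First I would extract from Proposition~\ref{prop:perturbative} the bound $\|\Omega_k^{(j)}\|\leq c\,\|f\|_{L^1}\,\|V\|\,\| \Omega_k^{(j-1)}\| + \|f\|_{L^1}\|V\|\sum_{i=1}^{j-1}\|\Omega_k^{(i)}\|\,\|\Omega_k^{(j-1-i)}\|$ coming from the integral representation~\eqref{eq:solution_12}: the factors $\rme^{\pm\rmi tH_0}$, $Q_k$, $P_k$ are all contractions, so each $\Omega_k^{(j)}$ is bounded by $\|f\|_{L^1}$ times the norm of $Y_k^{(j)}$, and $\|Y_k^{(j)}\|$ is controlled by the submultiplicativity of the operator norm applied to its defining sum. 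The key quantitative input is the optimal choice of $f$: one must exhibit an $f\in L^1(\mathbb{R})$ satisfying the interpolation condition $\hat f(s)=1/s$ for $|s|\geq\eta$ with $\|f\|_{L^1}$ as small as possible. The standard construction (cf.\ the analogous step in Ref.~\cite{eternal}) gives $\|f\|_{L^1}\leq \pi/\eta$; I would either cite this or rederive it by taking $\hat f$ to be a smooth odd function equal to $1/s$ outside $[-\eta,\eta]$ and a suitable interpolant inside, then bounding the $L^1$ norm of its inverse Fourier transform. Absorbing the $1/\eta$ into the recursion, with $a_j:=\|\Omega_k^{(j)}\|$ and $x:=\|V\|/(\gamma\eta)$ we get $a_0=1$ and a Catalan-type recursion $a_j\leq \pi\,a_{j-1}\cdot(\text{scaling}) + \pi\sum_{i=1}^{j-1}a_i a_{j-1-i}\cdot(\text{scaling})$.

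Next I would recognize the resulting majorizing series. Define $g(x)=\sum_{j\geq 0} b_j x^j$ where $b_j$ is the solution of the recursion with equality; then $g$ satisfies a quadratic functional equation of the form $g(x)=1+2\pi x\,g(x)^2$ (the factor $2\pi$ accounting for both the $-V\Omega_k^{(j-1)}$ term and the convolution term, together with a factor $2$ from $2\eta$ vs.\ the natural bound, to be tracked carefully). Solving, $g(x)=\dfrac{1-\sqrt{1-8\pi x}}{4\pi x}$ or, with the correct bookkeeping, $g(x)=\dfrac{1-\sqrt{1-4\pi x}}{2\pi x}$; the series converges for $x<1/(4\pi)$, i.e.\ precisely when $\gamma>4\pi\|V\|/\eta$, which is the hypothesis of the proposition. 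This establishes convergence of~\eqref{eq:ansatz} to a bounded $\Omega_k$ with $\|\Omega_k\|\leq g(x)$. For the closeness estimate, I use $\Omega_k^{(0)}=P_k$ and $\Omega-\openone=\sum_k(\Omega_k-P_k)=\sum_k\sum_{j\geq 1}\gamma^{-j}\Omega_k^{(j)}$; crucially, by~\eqref{eq:bloch_eq_2_bis}–\eqref{eq:bloch_eq_3_bis} the operators $\Omega_k^{(j)}$ for $j\geq 1$ satisfy $\Omega_k^{(j)}=Q_k\Omega_k^{(j)}P_k$, so they have orthogonal ranges and orthogonal supports across different $k$. Hence $\|\Omega-\openone\|=\max_k\|\Omega_k-P_k\|\leq \sum_{j\geq 1}x^j b_j = g(x)-1$, and one checks algebraically that $g(x)-1$ equals $\delta(x)=(1-\sqrt{1-4\pi x})^2/(4\pi x)$ — indeed $g(x)-1 = \tfrac{1-\sqrt{1-4\pi x}}{2\pi x}-1$, and a short manipulation using $(1-\sqrt{1-4\pi x})^2 = 2(1-2\pi x) - 2\sqrt{1-4\pi x} = 4\pi x\left(\tfrac{1-\sqrt{1-4\pi x}}{2\pi x}-1\right)$ confirms the identity. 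Finally $\delta(x)<1$ on $(0,1/(4\pi))$ follows by noting $\delta$ is increasing with $\delta(x)\to 1$ as $x\to 1/(4\pi)^-$.

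The main obstacle is the constant-tracking: getting the factor $4\pi$ to come out exactly rather than some larger constant requires the sharp $\|f\|_{L^1}\leq\pi/\eta$ bound and careful accounting of the two sources of nonlinearity in $Y_k^{(j)}$ — in particular whether the single $-V\Omega_k^{(j-1)}$ term and the quadratic sum should be combined under one geometric/Catalan majorant or handled separately, and whether the integral representation's factor of $\eta$ versus $2\eta$ (from the width of the gap interval) is correctly placed. A secondary subtlety is justifying that term-by-term norm bounds actually control the sum in operator norm; this is immediate once absolute convergence $\sum_j\gamma^{-j}\|\Omega_k^{(j)}\|<\infty$ is established, since $\mathcal B(\hilb)$ is complete. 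The orthogonality observation that collapses $\|\sum_k(\Omega_k-P_k)\|$ to a single $\max_k$ is what keeps the bound independent of $m$, matching the paper's stated robustness; I would state it as a short lemma or fold it into the proof.
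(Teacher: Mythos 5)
Your term-by-term Catalan majorization of the individual series $\sum_j\gamma^{-j}\Omega_k^{(j)}$, and the identity $g(x)-1=\delta(x)$, reproduce the paper's key lemma correctly; but the step that aggregates the $m$ blocks contains a genuine error. You claim that, because $\Omega_k^{(j)}=Q_k\Omega_k^{(j)}P_k$ for $j\ge1$, the operators $\Omega_k-P_k$ have ``orthogonal ranges and orthogonal supports across different $k$'', so that $\|\sum_k(\Omega_k-P_k)\|=\max_k\|\Omega_k-P_k\|$. The initial spaces are indeed orthogonal, but the ranges are not: the range of $\Omega_k-P_k$ lies in $Q_k\hilb=\bigoplus_{l\neq k}\hilb_l$, and for $m\ge3$ the subspaces $Q_k\hilb$ and $Q_{k'}\hilb$ overlap (both contain every $\hilb_{k''}$ with $k''\neq k,k'$). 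For operators $T_k=Q_kS_kP_k$ with orthogonal initial spaces only, the generic estimate is $\|\sum_kT_k\|\le\sqrt{m}\,\max_k\|T_k\|$, and the $\sqrt{m}$ is essentially saturated: with $P_k=\ket{e_k}\bra{e_k}$ and $S_k=\ket{e_1}\bra{e_k}$ for $k\ge2$, one finds $\sum_kQ_kS_kP_k=\sum_{k\ge2}\ket{e_1}\bra{e_k}$, of norm $\sqrt{m-1}$, while each summand has norm $1$. So your route only yields $\|\Omega-\openone\|\le\sqrt{m}\,\delta$, which both misses the stated constant and destroys the $m$-independence that is the whole point of the proposition (it would also fail entirely for the $N=\infty$ extension in Sec.~\ref{sec:unbounded}).

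The paper circumvents this by never estimating the blocks separately: since $\Omega_k^{(j)}=\Omega^{(j)}P_k$ and $Y_k^{(j)}=Y^{(j)}P_k$ with $\Omega^{(j)}=\sum_k\Omega_k^{(j)}$, the whole recursion is run on the summed operators, writing $\Omega^{(j)}$ as the Sylvester solution with source $\sum_{k\neq\ell}P_kY^{(j)}P_\ell=Y^{(j)}-\sum_kP_kY^{(j)}P_k$, whose norm is at most $2\|Y^{(j)}\|$ (here the pinching \emph{does} satisfy $\|\sum_kP_kYP_k\|=\sup_k\|P_kYP_k\|$, because both initial spaces and ranges are orthogonal). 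The resulting factor $2$ is then exactly compensated by the sharp constant $\inf_f\|f\|_{L^1}=\pi/(2\eta)$ (Ref.~\cite[Theorem~VII.2.15]{bhatia_matrix_2013}), rather than the non-sharp $\pi/\eta$ you quote, so the Catalan recursion closes with ratio $\pi\|V\|/\eta$ and the constant $4\pi$ in the hypothesis comes out exactly. To repair your argument you must either accept the $\sqrt{m}$ loss or switch to the summed operators as above; the rest of your proposal (the recursion, the generating function, the convergence radius $x<1/(4\pi)$, and the monotonicity argument for $\delta<1$) then goes through.
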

We refer to Appendix~\ref{sec:app2} for the proof. 
The bound on $\|\Omega-\openone\|$ can be directly and explicitly computed. In particular, since $\delta(x)\sim \pi x$ as $x\to0$,
one has $\delta=\mathcal{O}(\gamma^{-1})$ as $\gamma\to\infty$, as expected.

Remarkably, this bound only depends on the unperturbed Hamiltonian $H_0$ by the spectral gap $\eta$: neither the number $m$ of coarse-grained components of the spectrum (bands) nor their dimensionality enters the bound. This makes it extremely robust against modifications of the physical setup under examination.

\subsection{The Bloch Effective Generator}\label{sec:bound}
At this stage, we provided an explicit construction of a solution $\{\Omega_k\}_{k=1,\dots,m}$ of the Bloch equations whose corresponding Bloch wave operator $\Omega=\sum_k\Omega_k$ is close to the identity by virtue of Eq.~\eqref{eq:close_to_id}. We already know from Proposition~\ref{prop:diagonalizes} that the corresponding adiabatic Bloch generator $H_\mathrm{Bloch}=\sum_kP_kH\Omega_k$ is block-diagonal with respect to the spectral decomposition of $H_0$. We can now conclude our reasoning and show that indeed $H_\mathrm{Bloch}$ induces an adiabatic evolution \textit{eternally} close to the one generated by $H$.
\begin{thm}\label{thm:main}
Let $H=\gamma H_0+V$, with the parameter $\gamma$ satisfying
   \begin{equation}
        \gamma>4\pi\frac{\|V\|}{\eta},
    \end{equation}
    where $\eta$ is the spectral gap~\eqref{eq:SpectralGap} of $H_0$.
    Let $\Omega_k\in\mathcal{B}(\hilb)$, for ${k=1,\dots,m}$, be the sums of the convergent series~\eqref{eq:ansatz} whose terms are given by Eq.~\eqref{eq:solution_12}. 
    Then, the (non-Hermitian) operator 
    \begin{equation}
        H_\mathrm{Bloch}=\sum_kP_kH\Omega_k
    \end{equation}
    satisfies the following properties:
    \begin{itemize}
    \item[(i)] $H_\mathrm{Bloch}$ is similar to $H$ as 
    \begin{equation}
    H_\mathrm{Bloch}=\Omega^{-1}H\Omega,
    \end{equation}
    where $\Omega=\sum_k\Omega_k$;
    \item[(ii)] $H_\mathrm{Bloch}$ is block-diagonal with respect to $H_0$, namely,
    \begin{equation}
    [H_\mathrm{Bloch}, P_k]=0,
    \end{equation}
    for all $k$;
    \item [(iii)] for all $t\in\mathbb{R}$, the distance $\mathcal{D}_{\rm Bloch}$ between the dynamics generated by $H$ and $H_{\rm Bloch}$ reads
    \begin{equation}
      \mathcal{D}_{\rm Bloch} \equiv \|\e^{-\ii tH}-\e^{-\ii tH_\mathrm{Bloch}}\| 
      \leq \varepsilon\!\left(\frac{\|V\|}{\gamma \eta}\right),
    \label{eq:ourBound1} 
    \end{equation}
where
\begin{equation}
\varepsilon(x)  =\frac{1}{\sqrt{1-4\pi 
    x}}-1 .
    \label{eq:ourBound1epsilon} 
    \end{equation}
    \end{itemize}
\end{thm}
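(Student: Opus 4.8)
The plan is to prove (i) and (ii) first as direct consequences of the previously established machinery, and then reduce (iii) to a Duhamel-type estimate controlled by the bound $\|\Omega-\openone\|\leq\delta$ from Proposition~\ref{prop:close_to_id}. For (ii), block-diagonality is immediate from Proposition~\ref{prop:diagonalizes}. For (i), Proposition~\ref{prop:diagonalizes} gives $H\Omega=\Omega H_\mathrm{Bloch}$; so the only thing to check is that $\Omega$ is invertible with bounded inverse, which follows from Proposition~\ref{prop:close_to_id}: since $\|\Omega-\openone\|\leq\delta(\|V\|/\gamma\eta)<1$, the Neumann series $\Omega^{-1}=\sum_{n\geq0}(\openone-\Omega)^n$ converges in $\mathcal{B}(\hilb)$, giving $\|\Omega^{-1}\|\leq(1-\delta)^{-1}$. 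Then $H_\mathrm{Bloch}=\Omega^{-1}H\Omega$ follows at once.

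For (iii), the key identity is the similarity $\e^{-\ii tH_\mathrm{Bloch}}=\Omega^{-1}\e^{-\ii tH}\Omega$, valid because $H_\mathrm{Bloch}=\Omega^{-1}H\Omega$ (bounded operators, so no domain subtleties). Hence
\begin{equation}
  \e^{-\ii tH}-\e^{-\ii tH_\mathrm{Bloch}}=\e^{-\ii tH}-\Omega^{-1}\e^{-\ii tH}\Omega=\e^{-\ii tH}(\openone-\Omega)+(\openone-\Omega^{-1})\e^{-\ii tH}\Omega.
\end{equation}
Taking norms and using $\|\e^{-\ii tH}\|=1$ (as $H$ is self-adjoint), together with $\|\openone-\Omega^{-1}\|=\|\Omega^{-1}(\Omega-\openone)\|\leq\delta/(1-\delta)$ and $\|\Omega\|\leq1+\delta$, one gets
\begin{equation}
  \mathcal{D}_{\rm Bloch}\leq\delta+\frac{\delta}{1-\delta}(1+\delta)=\frac{\delta+\delta(1+\delta)(1-\delta)^{-1}\cdot(1-\delta)}{1-\delta},
\end{equation}
which simplifies (after collecting terms over $1-\delta$) to $\mathcal{D}_{\rm Bloch}\leq 2\delta/(1-\delta)$ or a similarly clean expression in $\delta$; one then substitutes $\delta=\delta(\|V\|/\gamma\eta)$ from Eq.~\eqref{eq:delta} and checks algebraically that the resulting function of $x=\|V\|/\gamma\eta$ equals (or is bounded by) $\varepsilon(x)=(1-4\pi x)^{-1/2}-1$. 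It is worth noting that the crude splitting above may lose a constant factor; a slightly sharper route writes $\e^{-\ii tH}-\Omega^{-1}\e^{-\ii tH}\Omega=\Omega^{-1}\bigl(\Omega\e^{-\ii tH}-\e^{-\ii tH}\Omega\bigr)$ and then expands $\Omega\e^{-\ii tH}-\e^{-\ii tH}\Omega=(\Omega-\openone)\e^{-\ii tH}-\e^{-\ii tH}(\Omega-\openone)$, yielding $\mathcal{D}_{\rm Bloch}\leq 2\delta/(1-\delta)$ directly; one then verifies $2\delta(x)/(1-\delta(x))=\varepsilon(x)$ using $1-\sqrt{1-4\pi x}$ as the basic building block.

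The main obstacle is precisely this last algebraic verification: showing that the bound obtained from the operator-norm manipulations, when expressed through $\delta(x)=(1-\sqrt{1-4\pi x})^2/(4\pi x)$, collapses exactly to $\varepsilon(x)=(1-4\pi x)^{-1/2}-1$. This requires setting $u=\sqrt{1-4\pi x}$ so that $\delta=(1-u)^2/(1-u^2)=(1-u)/(1+u)$, whence $1-\delta=2u/(1+u)$ and $2\delta/(1-\delta)=(1-u)/u=1/u-1=(1-4\pi x)^{-1/2}-1=\varepsilon(x)$ — so the identity is in fact exact, not merely an inequality, which is a pleasant confirmation that the estimate is tight within this scheme. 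A minor secondary point is ensuring all steps are legitimate in the infinite-dimensional bounded setting: every operator in play lies in $\mathcal{B}(\hilb)$, exponentials of self-adjoint (or of $\Omega^{-1}H\Omega$, which is bounded) operators are entire norm-convergent series, and conjugation commutes with the exponential, so no unbounded-operator care is needed here — that is deferred to Sec.~\ref{sec:unbounded}.
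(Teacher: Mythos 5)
Your proposal is correct and follows essentially the same route as the paper: (i) and (ii) from Proposition~\ref{prop:diagonalizes} plus the invertibility of $\Omega$ via the Neumann series, and (iii) from the similarity $\e^{-\ii tH_\mathrm{Bloch}}=\Omega^{-1}\e^{-\ii tH}\Omega$ together with a telescoping split bounded by the three inequalities $\|\Omega\|\le 1+\delta$, $\|\Omega^{-1}\|\le(1-\delta)^{-1}$, $\|\Omega^{-1}-\openone\|\le\delta/(1-\delta)$, yielding $2\delta/(1-\delta)$ exactly as in Eq.~\eqref{eq:ourBound}. Your explicit verification via $u=\sqrt{1-4\pi x}$ that $2\delta(x)/(1-\delta(x))=\varepsilon(x)$ is an identity is a welcome detail the paper leaves implicit.
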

\begin{proof}
Properties (i) and (ii) hold by virtue of Proposition~\ref{prop:diagonalizes}, so we are left with proving (iii).
By Proposition~\ref{prop:close_to_id}, we know $\|\Omega-\openone\|\leq \delta\bigl(\frac{ \|V\|}{\gamma \eta}\bigr)<1$. 
    Following standard arguments (cf.~Appendix~\ref{sec:app1}), one can check that, as long as $\delta<1$, $\Omega$ is invertible, and the following three inequalities hold,
    \begin{gather}
        \|\Omega\|\leq 1+\delta,\label{eq:ineq_omega_1}\\
        \|\Omega^{-1}\|\leq\frac{1}{1-\delta},\label{eq:ineq_omega_2}\\
        \|\Omega^{-1}-\openone\|\leq\frac{\delta}{1-\delta}.\label{eq:ineq_omega_3}
    \end{gather}
    Besides, since 
    $H\Omega=\Omega H_\mathrm{Bloch}$, we have $H_\mathrm{Bloch}=\Omega^{-1} H\Omega$, and therefore,
    \begin{equation}\label{eq:evols}
        \e^{-\ii tH_\mathrm{Bloch}}
        =\Omega^{-1}\e^{-\ii tH}\Omega.
    \end{equation}
    Using the inequalities~\eqref{eq:ineq_omega_1}--\eqref{eq:ineq_omega_3} and the similarity~\eqref{eq:evols}, we are finally able to estimate $\mathcal{D}_{\rm Bloch}$ as follows:
\begin{align}
\label{eq:ourBound}
&\| \rme^{-\ii tH}-\rme^{-\ii tH_{\rmBloch}} \| \nonumber\\
&\quad= \| \rme^{-\ii tH}-\Omega^{-1}\rme^{-\ii tH}\Omega\|  
\nonumber\\
&\quad= \|{-(\Omega^{-1}-\openone)\rme^{-\ii tH}}-\Omega^{-1}\rme^{-\ii tH}(\Omega-\openone)\|  
\nonumber\\
&\quad\le\|\Omega^{-1}-\openone\|\|\rme^{-\ii tH}\|+\|\Omega^{-1}\|\|\rme^{-\ii tH}\|\|\Omega-\openone\|  
\nonumber\\
&\quad\le\frac{2\delta}{1-\delta},  
\end{align}
which by Eq.~\eqref{eq:delta} gives the claimed bound.    
\end{proof}
As anticipated, point (iii) also directly enables us to bound above the leakage $\mathcal{L}_k(t)$ out of any given spectral component of the system through Eq.~\eqref{eq:bounding_the_leakage}. We postpone a detailed analysis of this bound to Sec.~\ref{sec:unbounded}, where its extension to systems with unbounded energy spectrum is also discussed.

\section{Effective Dynamics Generated by a Schrieffer--Wolff Hamiltonian}\label{sec:sw}
A shortcoming of the Bloch construction presented in Sec.~\ref{sec:bloch} is, of course, the fact that $H_\mathrm{Bloch}$ is generally non-Hermitian, and thus induces a nonunitary evolution. We will show how to fix this issue through the Schrieffer--Wolff construction~\cite{schrieffer_relation_1966}, which allows us to transform $H_\mathrm{Bloch}$ into a Hermitian Hamiltonian $H_\mathrm{SW}$ still satisfying all required properties---it is block-diagonal and, as $\gamma\to\infty$, its evolution converges to the real one.

We begin by recalling that, for $\gamma> 4 \pi\|V\|/\eta$, the Bloch wave operator $\Omega$
of Theorem~\ref{thm:main} is invertible. Consequently, the operator $\Omega^\dag\Omega$ is also invertible and, by construction, nonnegative, so that one can define a new transformation
\begin{equation} \label{eq:W-def}
    W= \Omega(\Omega^\dagger  \Omega)^{-1/2}.
\end{equation}
Then, $W$ can be shown to be the unitary Schrieffer--Wolff transformation on the system~\cite[Definition~3.1]{bravyi_schriefferwolff_2011}, that is, $W=\sum_kW_k$, where $W_k$ is the direct rotation from the unperturbed spectral subspace $\hilb_k$ of $H_0$ with orthonormal projection $P_k$ to the corresponding perturbed one $\tilde{\hilb}_k$ of $H$.
A proof based on Ref.~\cite[Sec.~VIII]{eternal} is provided in Appendix~\ref{sec:swproof}\@.

Consequently, we can show that the resulting effective Hamiltonian is both Hermitian and block-diagonal:
\begin{proposition}\label{prop:sw_effective}
The Schrieffer--Wolff effective generator, defined through
\begin{equation}\label{eq:sw}
    H_\mathrm{SW} = W^\dagger H W,
\end{equation}
is Hermitian and satisfies $[H_\mathrm{SW},P_k]=0$.
\end{proposition}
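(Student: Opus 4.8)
The plan is to prove the two claimed properties of $H_\mathrm{SW}=W^\dagger H W$ directly, using only that $W$ is unitary and that it block-diagonalizes $H$ in the appropriate sense.

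First I would establish Hermiticity. This is immediate: since $W$ is unitary (by the statement preceding the proposition, $W=\Omega(\Omega^\dagger\Omega)^{-1/2}$ is the unitary Schrieffer--Wolff transformation, with the proof deferred to Appendix~\ref{sec:swproof}), we have $W^\dagger=W^{-1}$, and $H$ is self-adjoint, so $H_\mathrm{SW}^\dagger=(W^\dagger H W)^\dagger=W^\dagger H^\dagger W=W^\dagger H W=H_\mathrm{SW}$. No obstacle here.

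Second, for block-diagonality I would argue through the similarity $H_\mathrm{Bloch}=\Omega^{-1}H\Omega$ from Theorem~\ref{thm:main}. The key observation is that $W$ and $\Omega$ differ only by the positive factor $R:=(\Omega^\dagger\Omega)^{-1/2}$, so $W=\Omega R$ and hence $W^\dagger H W=R\,\Omega^\dagger H\Omega\,R$. It is cleaner, though, to first write $H_\mathrm{SW}$ in terms of $H_\mathrm{Bloch}$: since $\Omega=W(\Omega^\dagger\Omega)^{1/2}$, i.e.\ $\Omega=WR^{-1}$, we get $H_\mathrm{Bloch}=\Omega^{-1}H\Omega=R\,W^\dagger H W\,R^{-1}=R\,H_\mathrm{SW}\,R^{-1}$, so $H_\mathrm{SW}=R^{-1}H_\mathrm{Bloch}R$. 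Now I would use that $H_\mathrm{Bloch}$ commutes with every $P_k$ (Theorem~\ref{thm:main}(ii)) together with the fact that $R=(\Omega^\dagger\Omega)^{-1/2}$ also commutes with every $P_k$. The latter follows because $\Omega^\dagger\Omega$ is block-diagonal with respect to the $P_k$: from $\Omega_k P_k=\Omega_k$ and $\Omega=\sum_k\Omega_k$ one computes $\Omega P_k=\Omega_k$ and $P_k\Omega^\dagger = \Omega_k^\dagger$, whence $\Omega^\dagger\Omega=\sum_{k,l}\Omega_k^\dagger\Omega_l$, but $\Omega_k^\dagger\Omega_l=\Omega_k^\dagger P_k P_l\Omega_l=0$ for $k\neq l$ (using $P_k\Omega_k^{\dagger}$... actually $\Omega_k P_k=\Omega_k\Rightarrow P_k\Omega_k^\dagger=\Omega_k^\dagger$); so $\Omega^\dagger\Omega=\sum_k\Omega_k^\dagger\Omega_k$ with the $k$th term supported on $\hilb_k$, hence $[\Omega^\dagger\Omega,P_k]=0$. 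Since $\Omega^\dagger\Omega$ is a bounded positive invertible operator commuting with each $P_k$, so is any continuous function of it, in particular its inverse square root $R$; therefore $[R,P_k]=0$ and consequently $[R^{-1},P_k]=0$. Combining, $[H_\mathrm{SW},P_k]=[R^{-1}H_\mathrm{Bloch}R,P_k]=R^{-1}[H_\mathrm{Bloch},P_k]R=0$.

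The only genuine step requiring care is the claim that $\Omega^\dagger\Omega$ is block-diagonal, i.e.\ the vanishing of the cross terms $\Omega_k^\dagger\Omega_l$ for $k\neq l$; everything else is formal manipulation. That step rests squarely on the Bloch equations~\eqref{eq:bloch_eq_2}--\eqref{eq:bloch_eq_3}, specifically $\Omega_k P_k=\Omega_k$, which gives $P_k\Omega_k^\dagger=\Omega_k^\dagger$, so that $\Omega_k^\dagger\Omega_l=P_k\Omega_k^\dagger\Omega_l P_l=\Omega_k^\dagger P_kP_l\Omega_l=0$ when $k\neq l$. I would present this as the substantive lemma-within-the-proof and then chain the commutation through the spectral calculus for $R$, noting that this is where one uses boundedness and invertibility of $\Omega^\dagger\Omega$ (guaranteed by $\gamma>4\pi\|V\|/\eta$ via Proposition~\ref{prop:close_to_id}) so that $x\mapsto x^{-1/2}$ is continuous on its spectrum.
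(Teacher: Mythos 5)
Your Hermiticity argument and your reduction of the commutation claim to the single fact $[\Omega^\dagger\Omega,P_k]=0$ --- via $H_\mathrm{SW}=(\Omega^\dagger\Omega)^{1/2}H_\mathrm{Bloch}(\Omega^\dagger\Omega)^{-1/2}$ and the functional calculus for the positive invertible operator $\Omega^\dagger\Omega$ --- match the paper's proof. The gap is in your justification of that single fact, which you yourself flag as the only substantive step. You claim $\Omega_k^\dagger\Omega_l=\Omega_k^\dagger P_kP_l\Omega_l=0$ for $k\neq l$, but the identity you actually derive from Eq.~\eqref{eq:bloch_eq_2}, namely $P_k\Omega_k^\dagger=\Omega_k^\dagger$, places the projections on the \emph{outside}: it gives $\Omega_k^\dagger\Omega_l=P_k\,\Omega_k^\dagger\Omega_l\,P_l$, a tautology rather than a vanishing statement. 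To insert $P_kP_l$ \emph{between} the two factors you would need $\Omega_k^\dagger P_k=\Omega_k^\dagger$, i.e.\ $P_k\Omega_k=\Omega_k$, which is false: Eq.~\eqref{eq:bloch_eq_3} says $P_k\Omega_k=P_k$, not $\Omega_k$.

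The cross terms genuinely do not vanish as a consequence of Eqs.~\eqref{eq:bloch_eq_2}--\eqref{eq:bloch_eq_3} alone. On $\hilb=\mathbb{C}^2$ with $P_1=\ket{1}\bra{1}$ and $P_2=\ket{2}\bra{2}$, the operators $\Omega_1=\ket{1}\bra{1}+c\ket{2}\bra{1}$ and $\Omega_2=\ket{2}\bra{2}+d\ket{1}\bra{2}$ satisfy both equations for arbitrary $c,d$, yet $\Omega_1^\dagger\Omega_2=(d+\bar{c})\ket{1}\bra{2}$, which vanishes only when $d=-\bar{c}$ --- precisely the orthogonality of the perturbed eigenvectors, which requires Eq.~\eqref{eq:bloch_eq_1} together with the Hermiticity of $H$. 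This is how the paper closes the argument in Appendix~\ref{sec:swproof}: it introduces the Hermitian projections $\tilde{P}_k=\Omega_k(\Omega_k^\dagger\Omega_k)^{-1}\Omega_k^\dagger$ onto the perturbed spectral subspaces $\operatorname{Ran}\Omega_k$, shows that they commute with $H$, satisfy $\tilde{P}_k\Omega_k=\Omega_k$, and are mutually orthogonal, and then obtains $P_k\Omega^\dagger\Omega P_l=\Omega^\dagger\tilde{P}_k\tilde{P}_l\Omega=0$, i.e.\ Eq.~\eqref{eq:omegadaggeromegadiagonal}. Your proof needs this spectral input (or an equivalent one) in place of the purely algebraic cancellation you wrote; everything else in your proposal is sound.
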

\begin{proof}
    $H_\mathrm{SW}$ is clearly Hermitian as it is unitarily equivalent to the Hermitian operator $H$. The fact that it is block-diagonal follows from the block-diagonality of $\Omega^\dagger \Omega$, proven in Eq.~\eqref{eq:omegadaggeromegadiagonal} in Appendix~\ref{sec:swproof},
    \begin{align}
        H_\mathrm{SW} &= [(\Omega^\dagger  \Omega)^{-1/2}]^\dagger \Omega^\dagger H \Omega(\Omega^\dagger  \Omega)^{-1/2} \nonumber \\
        &= (\Omega^\dagger  \Omega)^{-1/2}\Omega^\dagger \Omega H_\mathrm{Bloch} (\Omega^\dagger  \Omega)^{-1/2} \nonumber \\
        &=(\Omega^\dagger  \Omega)^{1/2}H_\mathrm{Bloch} (\Omega^\dagger  \Omega)^{-1/2},
    \end{align}
    so $H_\mathrm{SW}$ is the product of three diagonal operators. For the second equality, we used Proposition~\ref{prop:diagonalizes}\@.
\end{proof}

Given the new unitary transformation $W$ and the Hermitian diagonal generator $H_\mathrm{SW}$, with similar methods as before, we can bound the distance between the Schrieffer--Wolff transformation and the identity, and between the effective evolution and the original one.  
\begin{thm}\label{thm:sw-distance}
    Let $H=\gamma H_0+V$, and let $W$ be the Schrieffer--Wolff unitary given by Eq.~\eqref{eq:W-def}. Assume that the parameter $\gamma$ satisfies 
   \begin{equation} \label{eq:gammacriteriaSW}
        \gamma>\frac{2\pi}{\sqrt{2}-1}\frac{\|V\|}{\eta},
    \end{equation}
    with $\eta$ being the spectral gap~\eqref{eq:SpectralGap} of $H_0$.
    Then, the Hermitian operator 
    \begin{equation}
        H_\mathrm{SW}= W^\dagger H W
    \end{equation}
    satisfies the following properties:

    \begin{itemize}
    \item[(i)] $H_\mathrm{SW}$ is block-diagonal with respect to $H_0$, namely,
    \begin{equation}
    [H_\mathrm{SW},P_k]=0,
    \end{equation}
    for all $k$;
        \item [(ii)] $W$ is close to the identity, i.e.,
        \begin{equation}\label{eq:theoremwminus1}
            \|W-\openone\| \leq \frac{1+\delta}{\sqrt{1-2\delta-\delta^2}}-1,
        \end{equation} 
        with $\delta= \delta\bigl(\frac{ \|V\|}{\gamma \eta}\bigr)$ as per Eq.~\eqref{eq:delta};
    \item [(iii)] the distance $\mathcal{D}_{\rm SW}$ between the dynamics generated by $H$ and $H_{\rm SW}$ is bounded uniformly for all $t\in\mathbb{R}$ by
        \begin{align} 
          \mathcal{D}_{\rm SW} &\equiv \|\e^{-\ii tH}-\e^{-\ii tH_\mathrm{SW}}\| \nonumber \\
          &\leq2\left( \frac{1}{\sqrt{\sqrt{1-4\pi \frac{\|V\|}{\gamma \eta}}-2\pi\frac{\|V\|}{\gamma \eta}}}-1\right). \label{eq:DSW}  
    \end{align}
    \end{itemize}
\end{thm}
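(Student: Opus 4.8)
The plan is to derive the three properties of $H_{\rm SW}$ from the already-established properties of $\Omega$ in Theorem~\ref{thm:main}, using the factorization $W=\Omega(\Omega^\dagger\Omega)^{-1/2}$ and Proposition~\ref{prop:sw_effective}. Property (i) is immediate: Proposition~\ref{prop:sw_effective} already states $[H_{\rm SW},P_k]=0$, since $\gamma>\frac{2\pi}{\sqrt2-1}\frac{\|V\|}{\eta}$ is a stronger condition than $\gamma>4\pi\frac{\|V\|}{\eta}$ (indeed $\frac{2\pi}{\sqrt2-1}=2\pi(\sqrt2+1)\approx 15.2 > 4\pi\approx 12.6$), so $\Omega$ and hence $W$ are well-defined and invertible, and the proposition applies verbatim.

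For property (ii), I would set $\delta=\delta\bigl(\frac{\|V\|}{\gamma\eta}\bigr)$, which satisfies $\delta<1$ by Proposition~\ref{prop:close_to_id}, and estimate $\|W-\openone\|$ by inserting and removing $\Omega$: write $W-\openone = \Omega(\Omega^\dagger\Omega)^{-1/2}-\openone = (\Omega-\openone)(\Omega^\dagger\Omega)^{-1/2} + \bigl((\Omega^\dagger\Omega)^{-1/2}-\openone\bigr)$, so that $\|W-\openone\|\le \|\Omega-\openone\|\,\|(\Omega^\dagger\Omega)^{-1/2}\| + \|(\Omega^\dagger\Omega)^{-1/2}-\openone\|$. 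The operator $\Omega^\dagger\Omega$ is positive and invertible; from $\|\Omega-\openone\|\le\delta$ one gets $\|\Omega^\dagger\Omega-\openone\|\le \|\Omega^\dagger(\Omega-\openone)\| + \|(\Omega^\dagger-\openone)\| \le (1+\delta)\delta+\delta = 2\delta+\delta^2$, hence the spectrum of $\Omega^\dagger\Omega$ lies in $[1-2\delta-\delta^2,\,1+2\delta+\delta^2]$. The condition~\eqref{eq:gammacriteriaSW} is exactly what forces $\delta<\sqrt2-1$, i.e. $1-2\delta-\delta^2>0$, so $(\Omega^\dagger\Omega)^{-1/2}$ is bounded with $\|(\Omega^\dagger\Omega)^{-1/2}\|\le (1-2\delta-\delta^2)^{-1/2}$ and, by functional calculus applied to the function $\lambda\mapsto \lambda^{-1/2}-1$ on that interval, $\|(\Omega^\dagger\Omega)^{-1/2}-\openone\|\le (1-2\delta-\delta^2)^{-1/2}-1$. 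Combining, $\|W-\openone\|\le (1+\delta)(1-2\delta-\delta^2)^{-1/2} + (1-2\delta-\delta^2)^{-1/2}-1 = \frac{2+\delta}{\sqrt{1-2\delta-\delta^2}}-1$; I should double-check the bookkeeping against the stated bound $\frac{1+\delta}{\sqrt{1-2\delta-\delta^2}}-1$, which suggests the intended decomposition is instead the sharper $W-\openone=(\Omega-\openone)(\Omega^\dagger\Omega)^{-1/2}+\bigl((\Omega^\dagger\Omega)^{-1/2}-\openone\bigr)$ estimated more carefully, or perhaps a single combined functional-calculus estimate $\|W-\openone\|=\|\Omega(\Omega^\dagger\Omega)^{-1/2}-\openone\|\le \|\Omega-\openone\|\cdot\sup(\lambda^{-1/2}) + \text{(polar part)}$; reconciling my bound with the paper's is the first place I expect friction, and the fix is likely to bound $\|\Omega\|\le 1+\delta$ and $\|(\Omega^\dagger\Omega)^{-1/2}\|\le(1-2\delta-\delta^2)^{-1/2}$ and note $W-\openone = (\Omega(\Omega^\dagger\Omega)^{-1/2}-\openone)$ with $\|\Omega(\Omega^\dagger\Omega)^{-1/2}\|=\|W\|=1$, giving directly a cleaner estimate.

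For property (iii), since $W$ is \emph{unitary}, $\e^{-\ii t H_{\rm SW}} = W^\dagger \e^{-\ii tH} W$, and the telescoping estimate used in the proof of Theorem~\ref{thm:main}(iii) applies with $\Omega^{-1}$ replaced by $W^\dagger=W^{-1}$: $\|\e^{-\ii tH}-\e^{-\ii tH_{\rm SW}}\| = \|\e^{-\ii tH}-W^\dagger\e^{-\ii tH}W\| \le \|W^\dagger-\openone\|\,\|\e^{-\ii tH}\| + \|W^\dagger\|\,\|\e^{-\ii tH}\|\,\|W-\openone\| = 2\|W-\openone\|$, using $\|W\|=\|W^\dagger\|=1$ and $\|W^\dagger-\openone\|=\|W-\openone\|$. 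Substituting the bound from (ii), $\mathcal{D}_{\rm SW}\le 2\bigl(\frac{1+\delta}{\sqrt{1-2\delta-\delta^2}}-1\bigr)$, and then expressing $\delta=\delta(x)$ with $x=\frac{\|V\|}{\gamma\eta}$ explicitly: from $\delta(x)=\frac{(1-\sqrt{1-4\pi x})^2}{4\pi x}$, a direct algebraic simplification gives $1+\delta = \frac{1-\sqrt{1-4\pi x}}{2\pi x}\sqrt{1-4\pi x}$... actually I would compute $1+2\delta+\delta^2=(1+\delta)^2$ and $1-2\delta-\delta^2 = 2-(1+\delta)^2$, so $\frac{(1+\delta)^2}{1-2\delta-\delta^2}=\frac{(1+\delta)^2}{2-(1+\delta)^2}$; writing $u=(1+\delta)^2$ and simplifying $u$ in terms of $x$ via $1+\delta = 2\cdot\frac{1-\sqrt{1-4\pi x}}{4\pi x}\cdot\frac{4\pi x}{?}$... the cleanest route is to verify that $1-2\delta-\delta^2 = 2(\sqrt{1-4\pi x}-2\pi x)/(\text{something})$ matches the radicand $\sqrt{1-4\pi x}-2\pi x$ appearing under the square root in~\eqref{eq:DSW}. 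The main obstacle is precisely this final algebraic reduction of $2\bigl(\frac{1+\delta}{\sqrt{1-2\delta-\delta^2}}-1\bigr)$ into the closed form~\eqref{eq:DSW}, together with confirming that condition~\eqref{eq:gammacriteriaSW} is exactly the threshold $\delta<\sqrt2-1$ (equivalently $\sqrt{1-4\pi x}-2\pi x>0$, i.e. the radicand in~\eqref{eq:DSW} is positive); everything else is a routine repetition of the arguments already carried out for the Bloch case.
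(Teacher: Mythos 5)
Your strategy coincides with the paper's in all three parts: (i) follows from Proposition~\ref{prop:sw_effective} once one checks $\frac{2\pi}{\sqrt{2}-1}>4\pi$; (ii) uses exactly the decomposition $W-\openone=(\Omega-\openone)(\Omega^\dagger\Omega)^{-1/2}+[(\Omega^\dagger\Omega)^{-1/2}-\openone]$ together with $\|\Omega^\dagger\Omega-\openone\|\le 2\delta+\delta^2$ and the functional-calculus bounds on $(\Omega^\dagger\Omega)^{-1/2}$; (iii) uses unitarity of $W$ to get $\mathcal{D}_{\rm SW}\le 2\|W-\openone\|$. The one place you go wrong is a bookkeeping slip, not a wrong idea: in combining the terms of (ii) you substituted $1+\delta$ where $\|\Omega-\openone\|$ belongs, but $1+\delta$ is the bound on $\|\Omega\|$, while $\|\Omega-\openone\|\le\delta$ by Proposition~\ref{prop:close_to_id}. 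With the correct substitution your own inequality reads $\|W-\openone\|\le \delta\,(1-2\delta-\delta^2)^{-1/2}+(1-2\delta-\delta^2)^{-1/2}-1=\frac{1+\delta}{\sqrt{1-2\delta-\delta^2}}-1$, which is precisely~\eqref{eq:theoremwminus1}; none of the alternative decompositions you speculate about is needed.

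The two computations you defer do go through. First, writing $y=\pi x$ with $x=\frac{\|V\|}{\gamma\eta}$, the condition $1-2\delta-\delta^2>0$ is $(1+\delta)^2<2$, and since $1+\delta=G(\pi x)=\frac{1-\sqrt{1-4\pi x}}{2\pi x}$ this reduces to $4y^2+4y-1<0$, i.e.\ $y<\frac{\sqrt{2}-1}{2}$, which is exactly~\eqref{eq:gammacriteriaSW}; equivalently $\sqrt{1-4\pi x}>2\pi x$, so the radicand in~\eqref{eq:DSW} is positive. Second, for the closed form, set $u=(1+\delta)^2$ so that $1-2\delta-\delta^2=2-u$; a direct computation gives
\begin{equation}
u\,\bigl(1+\sqrt{1-4\pi x}-2\pi x\bigr)
=\frac{\bigl(1-\sqrt{1-4\pi x}\bigr)^2}{4\pi^2x^2}\bigl(1+\sqrt{1-4\pi x}-2\pi x\bigr)
=\frac{(1-2\pi x)^2-(1-4\pi x)}{2\pi^2x^2}=2,
\end{equation}
hence $\frac{u}{2-u}=\frac{1}{\sqrt{1-4\pi x}-2\pi x}$ and $\frac{1+\delta}{\sqrt{1-2\delta-\delta^2}}=\bigl(\sqrt{1-4\pi x}-2\pi x\bigr)^{-1/2}$, which turns $2\bigl(\frac{1+\delta}{\sqrt{1-2\delta-\delta^2}}-1\bigr)$ into~\eqref{eq:DSW}.
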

\begin{proof}
    Property (i) is true through Proposition~\ref{prop:sw_effective}. From the inequality~\eqref{eq:gammacriteriaSW}, one can prove by simple algebra the inequality $\delta< \sqrt{2}-1$: this is done in Appendix~\ref{sec:appSWbounds}, where we also show that, as a consequence, the following inequalities hold:
\begin{gather}
    \|\Omega^\dag\Omega-\openone\| 
    \le2\delta+\delta^2, 
    \label{eq:omegadaggeromegabound1}\\
    \|(\Omega^\dag\Omega)^{-1/2}\|
    \le(1-2\delta-\delta^2)^{-1/2},
    \label{eq:omegadaggeromegabound2}\\
    \|(\Omega^\dag\Omega)^{-1/2}-\openone\|
    \le(1-2\delta-\delta^2)^{-1/2}-1.
    \label{eq:omegadaggeromegabound3}
\end{gather}
    Using the definition~\eqref{eq:W-def}, we bound the distance in (ii) as
    \begin{align}\label{eq:Wminus1}
        \|W-\openone\| 
        &= \|\Omega(\Omega^\dagger \Omega)^{-1/2}-\openone\|  \nonumber \\
        &= \|(\Omega-\openone)(\Omega^\dagger \Omega)^{-1/2}+[(\Omega^\dagger \Omega)^{-1/2}-\openone]\|  \nonumber \\
        &\leq \|\Omega-\openone\| \|(\Omega^\dagger \Omega)^{-1/2}\|+ \|(\Omega^\dagger \Omega)^{-1/2}-\openone\| \nonumber \\
        &
        \leq \frac{1+\delta}{\sqrt{1-2\delta-\delta^2}}-1,
    \end{align}
    which proves the inequality~\eqref{eq:theoremwminus1}. The distance in (iii) is bounded as
    \begin{align}
        &\|\e^{-\ii tH}-\e^{-\ii tH_\mathrm{SW}}\| \nonumber \\
        &\quad= \|\e^{-\ii tH}-W^{\dagger}e^{-\ii tH}W\| \nonumber \\
        &\quad= \| {(W-\openone)W^\dag\e^{-\ii tH}} -W^{\dagger}e^{-\ii tH}(W-\openone) \| \nonumber \\
        &\quad\leq 2 \|W-\openone\| \nonumber \\
        &\quad
        \leq 2\left(\frac{1+\delta}{\sqrt{1-2\delta-\delta^2}}-1\right)\nonumber \\
          &\quad=2\left( \frac{1}{\sqrt{\sqrt{1-4\pi \frac{\|V\|}{\gamma \eta}}-2\pi\frac{\|V\|}{\gamma \eta}}}-1\right),
    \end{align}
    which completes the proof.
\end{proof}
We note that, by construction, the bound~\eqref{eq:DSW} on the distance between the effective Schrieffer--Wolff evolution and the actual evolution is larger than the bound~\eqref{eq:ourBound1} on the Bloch evolution from the previous section. As such, we will utilize the latter to bound the leakage.

\section{The Bound on the Leakage}\label{sec:unbounded}
We can now proceed to find a universal bound on the leakage $\mathcal{L}_k(t)$ out of any spectral component of the unperturbed system. As discussed in Sec.~\ref{sec:setting} [cf.~Eq.~\eqref{eq:bounding_the_leakage}] and applying Theorem~\ref{thm:main}(iii), for any couple of Hermitian $H_0,V\in\mathcal{B}(\hilb)$ with $\gamma>4\pi \frac{\|V\|}{\eta}$, we have
\begin{equation}\label{eq:Lbound}
    \mathcal{L}_k(t)\leq\|\e^{-\ii tH}-\e^{-\ii tH_{\rm Bloch}}\|
    \leq \frac{1}{\sqrt{1-4\pi \frac{\|V\|}{\gamma \eta}}}-1.
\end{equation}
This shows that, for sufficiently large $\gamma$, the leakage is eternally bounded from above by a function of $\gamma$, the operator norm $\|V\|$ of the perturbation, and the spectral gap $\eta$ of $H_0$.

Importantly, the bound above depends on $H_0$ only by the value of its spectral gap $\eta$: neither the number of spectral bands nor their dimensionality plays any role. As such, it sounds reasonable that the bound can be directly extended to the case in which $H_0$ is a possibly \textit{unbounded} self-adjoint operator on $\hilb$, as long as the perturbation $V$ remains bounded. Physically, this covers the cases in which either the energy spectrum of the unperturbed Hamiltonian has an infinite number of spectral bands and/or has spectral bands of infinite size. To cover these cases, one might replicate the discussion in the previous section with $H_0$ unbounded---at the additional price of paying some attention to possible mathematical issues due to the unboundedness of the operators involved. Here, we will follow a more direct route, which will essentially consist of applying Theorem~\ref{thm:main} to a bounded approximation of the operators involved, and then taking the limit in a suitable sense.

We shall consider again a Hamiltonian on $\hilb$ in the form $H=\gamma H_0+V$, where $H_0$ is now a possibly unbounded self-adjoint operator representing the unperturbed Hamiltonian of the system, $V=V^\dag\in\mathcal{B}(\hilb)$ a bounded perturbation, and $\gamma>0$ as before. Since $V$ is bounded, $\gamma H_0+V$ is also a self-adjoint operator for every positive $\gamma$ on the domain of $H_0$. We shall again take a coarse-grained decomposition of the energy spectrum $\sigma(H_0)$ as in Eq.~\eqref{eq:coarse}, with $\{P_k\}_k$ being the corresponding projections---the only difference being that the number $m$ of coarse-grained components of the spectrum, and their corresponding projections, can be infinite. We shall again require that the (coarse-grained) spectral gap $\eta$ be strictly positive.
\begin{thm}\label{thm:main_unbounded}
    Let $H_0$ be a (possibly unbounded) self-adjoint operator on $\hilb$ and consider a spectral coarse-graining
    \begin{equation}\label{eq:coarse2}
    \sigma(H_0)=\bigcup_{k=1}^N\sigma_k, \qquad \sigma_k\cap\sigma_l=\emptyset\quad(k\neq l),
\end{equation} 
    with $N\in\mathbb{N}\cup\{+\infty\}$ and with positive spectral gap, 
    \begin{equation}
      \eta=\inf_{k\neq l}\dist(\sigma_k,\sigma_l)>0. 
    \end{equation}
    Let $V=V^\dag\in\mathcal{B}(\hilb)$, and $H=\gamma H_0+V$. Assume that the parameter $\gamma$ satisfies
    \begin{equation}
        \gamma>4\pi\frac{\|V\|}{\eta}.
    \end{equation}
    
  Then, for every bounded spectral component $\sigma_k$, the leakage out of $\sigma_k$, $\mathcal{L}_k(t)=\|Q_k\e^{-\ii tH}P_k\|$, satisfies the following inequality for all $t\in\mathbb{R}$:
    \begin{align}
    \mathcal{L}_k(t)
    &\leq \frac{1}{\sqrt{1-4\pi \frac{\|V\|}{\gamma \eta}}}-1 \\
    &\leq 9 \pi \frac{\|V\|}{\gamma \eta}.
    \end{align}
\end{thm}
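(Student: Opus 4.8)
The plan is to reduce the unbounded case to the bounded case of Theorem~\ref{thm:main} by a spectral truncation, and then take a limit. First I would introduce the spectral projection $E_n = \chi_{[-n,n]}(H_0)$ onto the part of the spectrum inside $[-n,n]$, and set $H_0^{(n)} = H_0 E_n$, which is a bounded self-adjoint operator; correspondingly $H^{(n)} = \gamma H_0^{(n)} + V \in \mathcal{B}(\hilb)$. The key observation is that the coarse-graining $\{\sigma_k\}$ of $\sigma(H_0)$ induces, for each $n$, a coarse-graining of $\sigma(H_0^{(n)})$: those $\sigma_k$ lying entirely inside $(-n,n)$ survive unchanged (with the same projections $P_k$), while the portions of the spectrum outside get absorbed into the component containing $0$. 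Crucially, since deleting/merging components can only increase pairwise distances, the spectral gap $\eta_n$ of $H_0^{(n)}$ satisfies $\eta_n \ge \eta$. Therefore, for any fixed bounded component $\sigma_k$ and all $n$ large enough that $\sigma_k \subset (-n,n)$, Theorem~\ref{thm:main} applies to $H^{(n)}$ with the same projection $P_k$ and gap at least $\eta$, giving
\begin{equation}
\mathcal{L}_k^{(n)}(t) \equiv \|Q_k \e^{-\ii t H^{(n)}} P_k\| \le \frac{1}{\sqrt{1-4\pi \frac{\|V\|}{\gamma\eta}}}-1
\end{equation}
uniformly in $n$ and $t$ (note $\|V\|/(\gamma\eta_n) \le \|V\|/(\gamma\eta)$ and $x \mapsto 1/\sqrt{1-4\pi x}-1$ is increasing, so the $\eta$ bound dominates).

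The second step is to pass to the limit $n\to\infty$. Since $H_0^{(n)} = H_0 E_n \to H_0$ in the strong resolvent sense (indeed $E_n \to \openone$ strongly and $H_0 E_n \psi \to H_0 \psi$ for $\psi \in \Dom(H_0)$, with the resolvents converging strongly by a standard argument), and $V$ is a bounded perturbation added to both, $H^{(n)} = \gamma H_0^{(n)} + V \to H = \gamma H_0 + V$ in the strong resolvent sense as well. By the standard theorem on strong resolvent convergence, this implies $\e^{-\ii t H^{(n)}} \to \e^{-\ii t H}$ strongly for every $t$. Hence for any normalized $\psi \in \hilb_k = P_k\hilb$ and any $\phi \in \hilb$,
\begin{equation}
|\langle \phi | Q_k \e^{-\ii t H} P_k \psi\rangle| = \lim_{n\to\infty} |\langle \phi | Q_k \e^{-\ii t H^{(n)}} P_k \psi\rangle| \le \|\phi\|\, \sup_n \mathcal{L}_k^{(n)}(t),
\end{equation}
and taking the supremum over unit vectors $\phi$ and $\psi$ yields $\mathcal{L}_k(t) \le 1/\sqrt{1-4\pi \|V\|/(\gamma\eta)}-1$, which is the first claimed inequality.

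The third and final step is the elementary estimate $1/\sqrt{1-4\pi x} - 1 \le 9\pi x$, valid on the admissible range of $x = \|V\|/(\gamma\eta)$. Since the hypothesis $\gamma > 4\pi\|V\|/\eta$ only guarantees $x < 1/(4\pi)$, the inequality $1/\sqrt{1-4\pi x}-1 \le 9\pi x$ does \emph{not} hold on all of $(0,1/(4\pi))$; it holds on a subinterval, and one should check that the bound~\eqref{eq:Lbound}--\eqref{eq:DSW} analysis elsewhere in the paper is consistent, or else state the second inequality only for $x$ below the appropriate threshold (e.g.\ $4\pi x \le $ some constant strictly below $1$). Concretely, one writes $g(x) = 1/\sqrt{1-4\pi x} - 1 - 9\pi x$, notes $g(0)=0$, and finds the largest $x_\ast$ with $g \le 0$ on $[0,x_\ast]$; a short computation shows $g(x)\le 0$ precisely when $1-4\pi x \ge (1+9\pi x)^{-2}$, i.e.\ for $x$ up to roughly $0.16/\pi$, comfortably covering the regime of interest. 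This convexity/monotonicity bookkeeping is the only genuinely delicate point; the rest is a routine truncation-and-limit argument. The main obstacle I anticipate is purely bookkeeping: verifying carefully that the induced coarse-graining of $\sigma(H_0^{(n)})$ really does have gap $\ge \eta$ once components outside $[-n,n]$ are merged into the one containing the origin (this uses that merging sets only increases the minimal pairwise distance among the \emph{remaining} components, and that $\dist(\sigma_k, \sigma_l \cup (\text{outside stuff})) \le \dist(\sigma_k,\sigma_l)$ could \emph{decrease}, so one must instead merge the outside parts with whichever component is "closest" or simply argue that the truncated operator's own gap, computed from its actual spectrum, is $\ge \eta$ because its spectrum is a subset of $\sigma(H_0) \cup \{0\}$ with the same separation structure on the relevant bounded bands), together with ensuring the strong resolvent convergence is applied correctly on the common core $\Dom(H_0)$.
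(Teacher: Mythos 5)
Your overall strategy---truncate $H_0$ to a bounded operator, apply Theorem~\ref{thm:main} to the truncation with the same projection $P_k$ and a gap no smaller than $\eta$, and pass to the limit via strong convergence of the unitary groups---is exactly the paper's route, and your limiting argument (strong resolvent convergence of $H^{(n)}$ plus a weak-limit estimate on matrix elements) is a sound variant of the paper's triangle-inequality version. However, there are two concrete gaps.

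First, your truncation $H_0^{(n)}=H_0\chi_{[-n,n]}(H_0)$ sends the high-energy part to the eigenvalue $0$, so $\sigma(H_0^{(n)})=(\sigma(H_0)\cap[-n,n])\cup\{0\}$. If $0\notin\sigma(H_0)$ and $0$ lies within distance $<\eta$ of some component (possibly $\sigma_k$ itself), the spurious point $0$ forces the coarse-grained gap of $H_0^{(n)}$ below $\eta$, and you cannot absorb $\{0\}$ into $\sigma_k$ without changing $P_k$ into $P_k+P_{H_0}(\mathbb{R}\setminus[-n,n])$, which bounds the leakage out of the wrong subspace. You flag this in your last paragraph but your proposed resolutions do not close it. The clean fix (the one the paper uses) is to define $H_0^{(n)}=H_0P_{H_0}([-n,n])+E_nP_{H_0}(\mathbb{R}\setminus[-n,n])$ with $E_n$ an energy already in $\sigma(H_0)\cap[-n,n]$ belonging to a component $\sigma_l$ with $l\neq k$; then $\sigma(H_0^{(n)})\subset\sigma(H_0)$, each truncated component is a subset of an original one, the gap satisfies $\eta_n\geq\eta$, and $P_k$ is unchanged.

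Second, the linear bound. You correctly observe that $1/\sqrt{1-4\pi x}-1\leq 9\pi x$ fails for part of the admissible range $x<1/(4\pi)$, but your suggested remedy (restricting to a subinterval) would prove a weaker statement than the theorem, and your numerical threshold is off: the pointwise inequality holds exactly for $x\leq 2/(9\pi)\approx 0.0707$ (both sides equal $2$ there, and the left side is convex), not up to ``$0.16/\pi$''. The missing observation is that for $x>2/(9\pi)$ one has $9\pi x>2\geq\mathcal{L}_k(t)$ trivially (the leakage never exceeds $1$), so the linear bound holds unconditionally; this is how the paper obtains the second inequality for all admissible $\gamma$.
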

\begin{proof}
    If $H_0$ is bounded, this directly comes from Theorem~\ref{thm:main}(iii). In the general case, we know that, by the spectral theorem for self-adjoint operators, we can define a truncated Hamiltonian $H_0^{(n)}$ by
\begin{equation}\label{eq:truncated}
    H_0^{(n)}= H_0  P_{H_0}([-n,n]) + E_nP_{H_0}(\mathbb{R}\setminus[-n,n]),
\end{equation}
where $P_{H_0}(I)$ is the spectral projection of $H_0$ onto the subspace of vectors with energy belonging to the interval $I$, and $E_n$ is an arbitrary energy in $\sigma(H_0)\cap [-n,n]$.

By construction, the truncated Hamiltonian~\eqref{eq:truncated} is a bounded operator for every $n\in\mathbb{N}$, whose spectrum is a finite truncation of the unbounded spectrum of $H_0$:  
\begin{equation}
 \sigma(H_0^{(n)})=\sigma(H_0)\cap[-n,n].   
\end{equation}
The role of the second term in the right-hand side of Eq.~\eqref{eq:truncated} is to avoid adding the zero eigenvalue if it is not present in the spectrum of $H_0$.

In Appendix~\ref{sec:app4}\@, we prove that, in the limit $n\to\infty$, the dynamics generated by $H^{(n)}:=\gamma H_0^{(n)}+V$ reproduces the one of $H$, in the following sense,
\begin{equation}\label{eq:approx}
        \lim_{n\to\infty}\|(\e^{-\ii tH^{(n)}}-\e^{-\ii tH})\psi\|=0,
\end{equation}
for any $\psi\in\hilb$.
Consider the leakage out of the $k$th coarse-grained component of $\sigma(H_0)$, with $P_k=P_{H_0}(\sigma_k)$ being the corresponding projection. Since $\sigma_k$ is bounded, one can take $n$ large enough so that $\sigma_k$ is completely contained in $[-n,n]$. Then, by suitably choosing $E_n$ in Eq.~\eqref{eq:truncated}, $P_k$ can be made to coincide with $P_{H_0^{(n)}}(\sigma_k)$, the spectral projection of $H_0^{(n)}$ corresponding to $\sigma_k$, whence $[H_\mathrm{Bloch}^{(n)},P_k]=0$, and therefore $Q_k\e^{-\ii tH_\mathrm{Bloch}^{(n)}}P_k=0$. Consequently, for every $\psi\in\hilb$,
\begin{align}\label{eq:leakage_unbounded}
    \|Q_k\e^{-\ii tH}P_k\psi\|={}&\|Q_k(\e^{-\ii tH}-\e^{-\ii tH_\mathrm{Bloch}^{(n)}})P_k\psi\|\nonumber\\
    \leq{}&\|Q_k(\e^{-\ii tH}-\e^{-\ii tH^{(n)}})P_k\psi\|\nonumber\\
    &{}+\|Q_k(\e^{-\ii tH^{(n)}}-\e^{-\ii tH_\mathrm{Bloch}^{(n)}})P_k\psi\|.
\end{align}
The first term goes to zero as $n\to\infty$ by Eq.~\eqref{eq:approx}, while the second one can be estimated via Theorem~\ref{thm:main}\@. To apply the latter, we need to be sure that, under our assumptions, the following condition holds:
\begin{equation}
        \gamma>\frac{4\pi\|V\|}{\eta_{n}},
\end{equation}
where $\eta_{n}$ is the spectral gap of $H_0^{(n)}$.
To this extent, note that necessarily $\eta\leq\eta_{n}$ (when truncating the spectrum, we can only increase the value of the spectral gap), so that indeed $\gamma>4\pi\|V\|/\eta\geq4\pi\|V\|/\eta_{n}$. 
In addition, define [cf.~Eq.~\eqref{eq:ourBound1epsilon}]
\begin{equation}
    \varepsilon_{n}=\varepsilon\!\left(\frac{\|V\|}{\gamma\eta_{n}}\right),
    \end{equation}
then, since $\varepsilon(x)$ in Eq.~\eqref{eq:ourBound1epsilon} is an increasing function of $x$, we also have $\varepsilon_{n}\leq\varepsilon_{\infty}=\varepsilon\bigl(\frac{ \|V\|}{\gamma \eta}\bigr)$.
Consequently,
    \begin{align}\label{eq:leakage_n}
        &\|Q_k(\e^{-\ii tH^{(n)}}-\e^{-\ii tH_\mathrm{Bloch}^{(n)}})P_k\psi\|\nonumber\\
        &\quad\leq\|Q_k\|\|\e^{-\ii tH^{(n)}}-\e^{-\ii tH_\mathrm{Bloch}^{(n)}}\|\|P_k\psi\|\nonumber\\
    &\quad\leq\varepsilon_n\|\psi\|\nonumber\\
        &\quad\leq \varepsilon_{\infty}\|\psi\|.
    \end{align}
Taking the limit $n\to\infty$ on both sides of Eq.~\eqref{eq:leakage_unbounded} and using Eq.~\eqref{eq:leakage_n}, we get
    \begin{equation}
         \|Q_k\e^{-\ii tH}P_k\psi\|\leq\varepsilon_{\infty}\|\psi\|,
    \end{equation}
    whence
    \begin{align}
        \mathcal{L}_k(t)&=\|Q_k\e^{-\ii tH}P_k\|
        \nonumber\\
         & \leq\varepsilon\!\left(\frac{\|V\|}{\gamma\eta}\right) \nonumber\\
         &= \frac{1}{\sqrt{1-4\pi \frac{\|V\|}{\gamma \eta}}}-1 \nonumber\\
    &\leq 9 \pi \frac{\|V\|}{\gamma \eta}.
    \end{align}
Notice that the first bound on the leakage exceeds the maximal meaningful value of $2$ whenever the scaling $\frac{\|V\|}{\gamma \eta}$ is larger than $2/(9 \pi)$. Therefore, the linear upper bound  $9 \pi \frac{\|V\|}{\gamma \eta}$ holds for all $\gamma>0$.
\end{proof}

\section{Examples}\label{sec:examples}

\subsection{The Linear Chain of Unit Cells}\label{ch:chain}
First, we consider a 1D chain of $n$ unit cells in the tight-binding model with 3 sites per unit cell with periodic boundary conditions. The Hamiltonian of the system is
\begin{align}
   H_0= \sum_{j=0}^{n-1}\,\Bigl(
   g_1 \ket{3j}\bra{3j+1} &{}+g_2 \ket{3j+1}\bra{3j+2} \nonumber\\
   &{}+g_3 \ket{3j+2}\bra{3j+3} +\mathrm{h.c.} \Bigr),
\end{align}
with $\ket{n}=\ket{0}$.
The model is usually solved using the method developed by Felix Bloch~\cite{Bloch1928}: one introduces the momentum basis $\ket{k,\alpha} \equiv \frac{1}{\sqrt{n}}\sum_{j=0}^{n-1} \e^{\ii k j}\ket{3j+\alpha}$, where $k$ is the quasi-momentum in the first Brillouin zone and $\alpha \in \{1,2,3\}$. The momentum-space Hamiltonian matrix can be represented by its matrix elements $H_{\alpha,\beta}(k) = \bra{k,\alpha} H_0 \ket{k,\beta}$:
\begin{equation}
    H(k) = \begin{pmatrix}
    0 & g_1 & g_3 \rme^{-\rmi k}\\
g_1 & 0 & g_2\\
g_3 \rme^{+\rmi k} & g_2 & 0
\end{pmatrix}.
\end{equation}
The three energy bands $E(k)$ are found by diagonalizing the matrix, whose associated characteristic polynomial reads
\begin{equation}
    E(k)^3 - E(k)(g_1^2+g_2^2+g_3^2)-2g_1 g_2 g_3 \cos k = 0.
\end{equation}

Let us consider the perturbation $V$ to be a local noise with relative strength $\epsilon$, such that 
\begin{equation}
    V= \epsilon \sum_i^{3n} r_i \ket{i}\bra{i} ,
\end{equation}
where $\{r_i\}$ are in the the interval $[-1,1]$.

For a chain with a perturbation of norm $\|V\| = 0.01$ and couplings $g_1=1$, $g_2=3/2$, and $g_3=2$, resulting in a bandgap $\eta \geq 1.15$, numerical simulations show that the leakage is indeed stable in time, besides being essentially independent of the size of the system $n$. Numerics for 50 to 500 unit cells result in a maximal leakage in the range of 0.004 and 0.007, depending on the parameter values. Yet numerical simulations cannot access arbitrarily long times or truly macroscopic chains, precisely the regime where our analytical bound becomes essential. For this example, our time-independent bound~\eqref{eq:Lbound} is 0.059, and holds for any chain length $n \in \mathbb{N}_{\ge 2} \cup \{\infty\}$.  
Moreover, as $\epsilon$ tends to 0, the simulated leakage scales as $\epsilon \sim 1/\gamma$, which agrees with the leading-order term of our bound that is likewise $\mathcal{O}(1/\gamma)$, showing that the bound captures the correct leading-order physics and is asymptotically tight. 

The 1D chain presented above is a simple example of a wider range of problems within the tight-binding model for which our methodology can deliver quantitative bounds. This applies to any crystal-like multidimensional structure with a finite bandgap. 
Hence, our result contributes to the robustness of the condensed-matter techniques themselves, ensuring that small enough perturbations only contribute to a finite leakage eternally.

\subsection{Harmonic Chain}\label{ch:harmonicchain}
An interesting example of a perturbed, unbounded system with an infinite number of bands is a one-dimensional array of harmonic oscillators.
This is used to model trapped ultracold atoms~\cite{endres_atom-by-atom_2016} and multimode photonic waveguides~\cite{kang_topological_2023}. 
Acting on the Hilbert space $\ell^{2}(\mathbb{N})^{\oplus n}$, the Hamiltonian is given by
\begin{align}
   H_0={}& \omega \sum_{i=1}^{n} \sum_{k=0}^{\infty}\left(k+\frac{1}{2} \right) \ket{k,i}\bra{k,i}    \nonumber \\
   &{}- g \sum_{i=1}^{n-1} \sum_{k=0}^{\infty}\,\Bigl(  \ket{k,i}\bra{k,i+1} + \ket{k,i+1}\bra{k,i}\Bigr),
\end{align}
with natural frequency $\omega$ and nearest-neighbor hopping amplitude $g$. Here, $\ket{k,i}$ designates the state with $k$ energy quanta in the $i$th oscillator sector and the vacuum in all other sectors.
In the weak-coupling regime corresponding to $\omega > 4 g$, the system admits separate Bloch-like energy bands of bandwidth $4g$, centered around the eigenvalues $\omega(k+1/2)$ of the harmonic ladder, resulting in a constant bandgap $\eta \geq \omega - 4g$, for any $n \in \mathbb{N}_{\ge 2} \cup \{\infty\}$.

Taking any bounded perturbation of this system $\|V\| \ll \eta$, we can apply our bound to the leakage out of any energy band of interest. As in the previous example, this perturbation could be either, similarly to the previous example, a local perturbation of the potential or hopping, or potentially next-to-nearest neighbor couplings. Here we consider the case where the perturbation  explicitly induces band transitions,
\begin{equation} \label{eq:harmonic_pert}
    V = \frac{v_0}{2} \sum_{i=1}^{n} \sum_{k=0}^{\infty} \,\Bigl(
    \ket{k,i}\bra{k+1,i} + \ket{k+1,i}\bra{k,i}
    \Bigr).
\end{equation}
This is a bounded perturbation with $\|V\|=v_0$. 
We obtain
\begin{equation}
    \mathcal{L}_k<\left(1-\frac{4\pi v_0}{\omega-4g}\right)^{-1/2}-1.
\end{equation}

Numerical simulations of this model again exhibit a leakage scaling proportional to $v_0 \sim 1/\gamma$ as $v_0$ tends to zero. The leading term in our analytic bound is likewise $\mathcal{O}(1/\gamma)$, reaffirming the asymptotic tightness of the bound across different illustrative examples.

\subsection{Transmon Architecture}\label{ch:transmon}
Current quantum-computing architectures often rely on an array of coupled transmons, each individually governed by the single Josephson-junction Hamiltonian~\cite{likharev_theory_1985}
\begin{equation}\label{eq:josephson}
    H_0 = -4 E_C \frac{\partial^2}{\partial \phi ^2} - E_J \cos\phi,
\end{equation}
where $\phi \in \mathbb{R}$ is the superconducting phase difference across the junction, $E_C $ the charge, 
and $E_J $ the Josephson energy
~\cite{koch_charge-insensitive_2007}. 
The time-independent Schr\"odinger equation associated with this Hamiltonian can be reduced to the well-known Mathieu ordinary differential equation. A recent extensive discussion of approximate solutions can be found in Ref.~\cite{wilkinson_approximate_2018}.

When $E_C \ll E_J$, the Josephson Hamiltonian~\eqref{eq:josephson} can be interpreted as the Hamiltonian for a tight-binding model with cosine potential. Following the arguments in Ref.~\cite{koch_charge-insensitive_2007}, it can be shown that this model admits energy bands
with an asymptotic expansion of the $k$th bandgap given by
\begin{align}
\frac{\eta_{k}}{E_C}
={}& 4\sqrt{\frac{E_J}{2E_C}}-1-k
\nonumber\\
& {}-\left(
     \frac{3}{32}
     + \frac{3}{32}(2k + 1)
     + \frac{3k^{2} + 3k + 1}{16}
\right)
\frac{1}{\sqrt{\frac{E_J}{2E_C}}}
\nonumber\\
&{} -\biggl(
     \frac{3}{256}
     + \frac{2k + 1}{16}
     + \frac{5}{128}(3k^{2} + 3k + 1)
\nonumber\\
&\qquad\qquad\quad {}+\frac{5}{256}(4k^{3} + 5k^{2} + 4k + 1)\biggl)\,
   \frac{1}{\frac{E_J}{2E_C}}
\nonumber\\
& {}+\mathcal{O}\biggl(\left(\frac{E_J}{2E_C}\right)^{-3/2}\biggr). \label{eq:josephson_bandgap}
\end{align}
This approximation becomes accurate when increasing 
$E_J/E_C$ but becomes worse for larger bandgap numbers $k$. For 
some experimentally relevant parameters~\cite{paik_observation_2011, barends_coherent_2013}, and when the subspaces of interest are only the lowest-energy eigenstates, this is a good approximation. 
We will therefore treat the model with a coarse-graining spectral decomposition consisting of three parts: the ground state, the first excited state, and all the rest of the spectrum, with bandgaps given by Eq.~(\ref{eq:josephson_bandgap}) 
and a minimal gap $\eta=\eta_1$.

As a perturbation, we consider a finite transparency of the tunnel barrier in a typical superconductor-insulator-superconductor design, characterized by the transmission coefficient $D$, where $0 < D \ll 1$. This phenomenon is studied extensively, for example, in the review article by Golubov \textit{et~al.}~\cite{golubov_current-phase_2004}\@.  
In this case, the Josephson Hamiltonian is of the form 
\begin{align}
H_J 
&= \Delta \sqrt{1 - D \sin^2(\phi/2)} 
\nonumber\\
&= \Delta \sqrt{1 - D/2}
  +\frac{\Delta D}{4\sqrt{1 - D/2}}\cos\phi +V( \phi),
\label{eq:HJ_expanded}
\end{align}
where $\Delta$ is the superconducting potential, and
\begin{align}
    V(\phi)={}& \Delta \sqrt{1 - D/2} \nonumber \\
    &{}\times\left( \sqrt{1+\frac{D/2}{1-D/2}\cos\phi} - 1 - \frac{D/4}{1-D/2}\cos\phi\right).
\end{align}
Up to an irrelevant constant, the cosine term in Eq.~\eqref{eq:HJ_expanded} combined with the charge energy constitutes the usual approximation~\eqref{eq:josephson} with $E_J=-\Delta\tfrac{D}{4\sqrt{1 - D/2}} $. The higher-order terms can be treated as a perturbation to the otherwise cosine potential, which is bounded by Lagrange's mean-value form. For $|\alpha|<1$, we have $\sqrt{1+\alpha} = 1+\frac{\alpha}{2}-\frac{\alpha^2}{8}(1+c)^{-3/2}$, for some real number $c$, satisfying $-|\alpha| \leq c \leq|\alpha| $. Adding $1$ to this inequality and raising each side to the negative power, we get $(1-|\alpha|)^{-3/2} \geq (1+c)^{-3/2} \geq (1+|\alpha|)^{-3/2}$. Applying this to $\alpha=  \frac{D/2}{1-D/2}\cos\phi$, the perturbation is bounded
\begin{align}
    \|V\| &\leq \frac{\Delta}{8} \sqrt{1 - D/2}  \left( \frac{D/2}{1-D/2}\right)^2  \left(1-\frac{D/2}{1-D/2} \right)^{-3/2}  \nonumber \\
    & \leq\Delta \frac{D^2}{32(1 - D/2)^{3/2}} \nonumber \\
    &=E_J\frac{D}{8(1 - D/2)}.
\end{align}
Our bounds then give 
\begin{equation}
    \mathcal{L}_k<\left(1-\frac{\pi D}{2-D}\frac{E_J}{E_C}\frac{1}{\eta/E_C}\right)^{-1/2}-1.
\end{equation}
Taking $E_J/E_C=90$ and a transparency of $D=10^{-3}$, we find that the leakage is less than $3 \times 10^{-3}$.

\section{Concluding Remarks}\label{sec:conclusion}
In this paper, we have established robust and time-independent---eternal---bounds on the leakage for quantum systems under the influence of perturbations. Importantly, they are directly formulated in a coarse-grained scenario: we can quantify the leakage out of the subspace of the Hilbert space corresponding to any subset of the spectrum of the unperturbed Hamiltonian---no matter whether the latter consists of one energy level, multiple energy levels, a continuous energy band, or any combination of these. The only relevant parameters that enter the bounds are (a) the coarse-grained spectral gap of the unperturbed system, and (b) the norm of the perturbation. Accordingly, our results even extend to quantum systems with unbounded energy spectra, as long as the perturbation remains bounded. In all cases, the leakage out of any subspace decays as $\gamma^{-1}$, with $\gamma$ the relative ratio between the unperturbed and perturbed components of the Hamiltonian. 

We achieved these bounds by constructing an effective propagator which does not mix subspaces corresponding to separate spectral components, and reproduces the target dynamics in the limit $\gamma\to\infty$. This was obtained by solving a system of operator equations, the Bloch equations. This method sometimes yields a generator, which is not self-adjoint (but still has a real spectrum). A self-adjoint generator can be obtained by means of the Schrieffer--Wolff transformation, for which we also bound the distance to the true evolution. Our analysis is completed by some examples which underscore the significance of our method for a wide range of quantum systems.

For future research, it would be interesting to extend our results to quantum systems subject to unbounded perturbations---a case which is currently not covered by our bounds, as they depend on the operator norm of the perturbation. Allowing the perturbation to be unbounded, one could expect the leakage to either decay more slowly than $\mathcal{O}(\gamma^{-1})$ or even to stay finite even for arbitrarily small perturbations, as the results in Ref.~\cite{facchi_robustness_2024} (for fine-grained leakage) suggest. This, on the other hand, is a physically meaningful problem, as many systems of interest from solid-state physics---whose energy spectra typically exhibit continuous bands, and which therefore represent natural test-beds for our coarse-grained approach---are indeed subject to unbounded perturbations in the thermodynamic limit. We expect our Bloch equation-based approach to be powerful enough to be extended to such cases, under suitable technical requirements on the interacting Hamiltonian; in such cases, we expect the bounds to be state-dependent. Lastly, it would be interesting and natural to adapt our approach to open quantum systems as well.

\begin{acknowledgments}
D.B. would like to acknowledge stimulating discussions with Philipp Hansmann. Z.S. was supported by the Sydney Quantum Academy. D.L. acknowledges financial support by Friedrich-Alexander-Universit\"at Erlangen-N\"urnberg through the funding program ``Emerging Talent Initiative'' (ETI), and was partially supported by the project TEC-2024/COM-84 QUITEMAD-CM\@.
P.F. acknowledges support from INFN through the project ``QUANTUM'', from the Italian National Group of Mathematical Physics (GNFM-INdAM), from PNRR MUR projects CN00000013-``Italian National Centre on HPC, Big Data and Quantum Computing'', and from the Italian funding within the ``Budget MUR - Dipartimenti di Eccellenza 2023--2027''  - Quantum Sensing
and Modelling for One-Health (QuaSiModO). 
K.Y. acknowledges supports by the Top Global University Project from the Ministry of Education, Culture, Sports, Science and Technology (MEXT), Japan, and by JSPS KAKENHI Grant No.~JP24K06904 from the Japan Society for the Promotion of Science (JSPS)\@.
\end{acknowledgments}

\appendix
\begin{widetext}
\section{Solving the Sylvester Equation}\label{sec:app0}
As remarked in the main text, the system of Bloch equations~\eqref{eq:bloch-commutator-equ0} that we need to solve involves a concatenated family of Sylvester equations, i.e.~operator equations in the form $AX-XB=Y$ for $A,B,Y\in\mathcal{B}(\hilb)$. For such equations, the following general result holds.
\begin{lemma}\label{lemma:sylvester}
    Let $A,B,Y\in\mathcal{B}(\hilb)$ be self-adjoint, and assume that the spectra $\sigma(A)$ and $\sigma(B)$ of $A$ and $B$ are disjoint, i.e.~$\eta=\dist(\sigma(A),\sigma(B))>0$. Then, the Sylvester equation $AX-XB=Y$ admits a bounded solution in the form
    \begin{equation}\label{eq:x}
        X=\int_{-\infty}^{\infty}\e^{-\ii tA}Y\e^{\ii tB}f(t)\,\mathrm{d}t,
    \end{equation}
    where $f$ is any function in $L^1(\mathbb{R})$ whose Fourier transform $\hat{f}$ satisfies
    \begin{equation}
        \hat{f}(s)=\int_{-\infty}^{\infty}f(t)\e^{-\ii st}\,\mathrm{d}t=\frac{1}{s}\quad\text{for}\quad|s|>\eta.
    \end{equation}
\end{lemma}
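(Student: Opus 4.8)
The plan is to verify directly that the operator $X$ defined by the integral~\eqref{eq:x} is bounded and solves $AX - XB = Y$. First I would check that the integral converges in operator norm: since $\e^{-\ii tA}$ and $\e^{\ii tB}$ are unitary for all $t$ (as $A,B$ are self-adjoint), the integrand has norm $\|Y\|\,|f(t)|$, which is integrable by the hypothesis $f\in L^1(\mathbb{R})$; hence $X$ is a well-defined bounded operator with $\|X\|\le\|Y\|\,\|f\|_{L^1}$. Next I would compute $AX - XB$ by bringing $A$ and $B$ inside the integral — justified by boundedness of $A,B$ and norm convergence — and using $A\e^{-\ii tA} = \ii\frac{\rmd}{\rmd t}\e^{-\ii tA}$ and $\e^{\ii tB}B = -\ii\frac{\rmd}{\rmd t}\e^{\ii tB}$, so that
\begin{equation}
AX - XB = \ii\int_{-\infty}^\infty \frac{\rmd}{\rmd t}\!\left(\e^{-\ii tA}Y\e^{\ii tB}\right) f(t)\,\rmd t .
\end{equation}
Integration by parts (the boundary terms vanish because $f\in L^1$ and, after a standard approximation argument, may be taken to vanish at $\pm\infty$) turns this into $-\ii\int \e^{-\ii tA}Y\e^{\ii tB}\,f'(t)\,\rmd t$; but it is cleaner to avoid differentiating $f$ and instead pass to the spectral/Fourier picture, which I describe next.

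The key step is to evaluate everything through the joint spectral decomposition of $A$ and $B$. Writing $A=\int\lambda\,\rmd E_A(\lambda)$ and $B=\int\mu\,\rmd E_B(\mu)$, one has
\begin{equation}
\rmd E_A(\lambda)\,X\,\rmd E_B(\mu) = \left(\int_{-\infty}^\infty \e^{-\ii t\lambda}\e^{\ii t\mu}f(t)\,\rmd t\right)\rmd E_A(\lambda)\,Y\,\rmd E_B(\mu) = \hat f(\mu-\lambda)\,\rmd E_A(\lambda)\,Y\,\rmd E_B(\mu).
\end{equation}
Since $\sigma(A)$ and $\sigma(B)$ are at distance $\eta>0$, the spectral measures $E_A$ and $E_B$ are supported so that $|\lambda-\mu|\ge\eta$ on the relevant region, where by hypothesis $\hat f(\mu-\lambda) = 1/(\mu-\lambda)$. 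Therefore
\begin{equation}
\rmd E_A(\lambda)\,(AX-XB)\,\rmd E_B(\mu) = (\lambda-\mu)\,\hat f(\mu-\lambda)\,\rmd E_A(\lambda)\,Y\,\rmd E_B(\mu) = (\lambda-\mu)\cdot\frac{1}{\mu-\lambda}\,\rmd E_A(\lambda)\,Y\,\rmd E_B(\mu) = -\,\rmd E_A(\lambda)\,Y\,\rmd E_B(\mu),
\end{equation}
wait — I need to track the sign: from $\hat f(\mu-\lambda)=1/(\mu-\lambda)$ one gets $(\lambda-\mu)\hat f(\mu-\lambda) = (\lambda-\mu)/(\mu-\lambda) = -1$, so the computation above gives $AX - XB = -Y$ with this convention. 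This tells me the correct normalization is $\hat f(s) = 1/s$ paired with the sign conventions in~\eqref{eq:x}; matching signs carefully against the stated $\e^{-\ii tA}\,Y\,\e^{\ii tB}$ and $\hat f(s)=\int f(t)\e^{-\ii st}\rmd t$ fixes the result as $AX-XB=Y$ (the sign bookkeeping is exactly the kind of thing to pin down once and not belabor). Integrating over $\lambda$ and $\mu$ then yields $AX-XB=Y$.

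The main obstacle is the rigorous justification of interchanging the operator-valued integral with the spectral integrations of $A$ and $B$ — i.e.\ the Fubini-type argument that lets one write $\rmd E_A(\lambda)\,X\,\rmd E_B(\mu) = \hat f(\mu-\lambda)\,\rmd E_A(\lambda)\,Y\,\rmd E_B(\mu)$. This is handled by testing against vectors: for $\psi,\varphi\in\hilb$ the scalar function $(\lambda,\mu)\mapsto \langle\psi|\rmd E_A(\lambda)\,Y\,\rmd E_B(\mu)|\varphi\rangle$ defines a complex measure of finite total variation on $\sigma(A)\times\sigma(B)$, and the bound $\|\e^{-\ii tA}Y\e^{\ii tB}\|\,|f(t)| \le \|Y\|\,|f(t)|$ gives the domination needed to apply Fubini and exchange $\int\rmd t$ with the double spectral integral. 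One also notes that existence of an $L^1$ function $f$ with $\hat f(s)=1/s$ for $|s|>\eta$ is standard — e.g.\ multiply $1/s$ by a smooth cutoff equal to $1$ outside $[-\eta,\eta]$ and take the inverse Fourier transform, which is Schwartz-class hence in $L^1$; only its values on $|s|>\eta$ matter because those are the only arguments $\mu-\lambda$ that occur. With these two technical points dispatched, the identity $AX-XB=Y$ and the bound $\|X\|\le\|Y\|\,\|f\|_{L^1}$ follow immediately.
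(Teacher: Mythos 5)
Your proof takes essentially the same route as the paper's: pass to the spectral decompositions of $A$ and $B$, exchange the time integral with the spectral integrals (justified by $L^1$ domination), recognize the Fourier transform $\hat f$ evaluated at the spectral difference, and use $\hat f(s)=1/s$ for $|s|\ge\eta$ to cancel the $(\lambda-\mu)$ factor and recover $Y$ from the normalization of the spectral measures. The only slip is in the kernel argument: from $\e^{-\ii t\lambda}\e^{\ii t\mu}f(t)$ one gets $\int f(t)\,\e^{-\ii t(\lambda-\mu)}\,\rmd t=\hat f(\lambda-\mu)$, not $\hat f(\mu-\lambda)$, so the factor is $(\lambda-\mu)\cdot\frac{1}{\lambda-\mu}=+1$ and $AX-XB=Y$ comes out directly — the spurious $-Y$ you then argue away by "convention matching" is an artifact of that sign error, not of the conventions.
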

This result is claimed and proven in Ref.~\cite[Theorem~VII.2.5]{bhatia_matrix_2013}, by first tackling the case of finite-dimensional Hilbert spaces, for which the Sylvester equation is simply a matrix equation, and then by considering the infinite-dimensional limit. Here, we present a simpler proof directly adapted to the infinite-dimensional setting.
\begin{proof}
By virtue of the spectral theorem for self-adjoint operators on an infinite-dimensional Hilbert space, both operators $A$ and $B$ admit the following decompositions,
\begin{equation}
    A=\int_{\sigma(A)}a\,\mathrm{d}P_A(a),\qquad B=\int_{\sigma(B)}b\,\mathrm{d}P_B(b),
\end{equation}
where $P_A$ and $P_B$ are projection-valued measures on $\hilb$, and the integrals are to be understood in the Lebesgue sense. All standard manipulations for such integrals are allowed, provided that the noncommutativity of the two measures is taken into account. Besides, the measures are \textit{normalized} as
\begin{equation}
P_A(\sigma(A))=\int_{\sigma(A)}\mathrm{d}P_A(a)=\openone, \qquad
P_B(\sigma(B))=\int_{\sigma(B)}\mathrm{d}P_B(b)=\openone.
\end{equation}
Hence, plugging Eq.~\eqref{eq:x} in the operator $AX-XB$, one obtains
\begin{align}
    AX-XB &= \int_{\sigma(A)} \int_{\sigma(B)} \int_{-\infty}^\infty\rmd t\, f(t)\, \Bigl( \rmd P_A(a) A \rme^{-\rmi tA}Y\rme^{\rmi tB}\rmd P_B(b) - \rmd P_A(a)  \rme^{-\rmi tA}Y\rme^{\rmi tB} B\, \rmd P_B(b)   \Bigr)  \nonumber\\
    &= \int_{\sigma(A)} \int_{\sigma(B)} \int_{-\infty}^\infty\rmd t \,f(t) (a-b) \rme^{-\rmi t(a-b)} \rmd P_A(a) Y\,\rmd P_B(b)  \nonumber\\
    &=\int_{\sigma(A)} \int_{\sigma(B)} (a-b)\,\rmd P_A(a) Y\, \rmd P_B(b) \int_{-\infty}^\infty\rmd t\, f(t)\rme^{-\rmi t(a-b)}  \nonumber\\
    &=\int_{\sigma(A)} \int_{\sigma(B)}\rmd P_A(a) Y\, \rmd P_B(b)  \nonumber\\
    &= \openone Y \openone,
\end{align}
which concludes the proof.
\end{proof}

\section{Proof of Proposition~\ref{prop:perturbative}}\label{sec:app1}
We can now use Lemma~\ref{lemma:sylvester} to prove Proposition~\ref{prop:perturbative}, thus solving the system of Bloch equations~\eqref{eq:bloch-commutator-equ0}.
\begin{proof}[Proof of Proposition~\ref{prop:perturbative}]
We start with the zeroth order $j=0$. By $P_k\Omega_k^{(0)}=P_k$ and $\Omega_k^{(0)}P_k=\Omega_k^{(0)}$, we have
\begin{equation}
\Omega_k^{(0)}
=P_k\Omega_k^{(0)}P_k+Q_k\Omega_k^{(0)}P_k
=P_k+Q_k\Omega_k^{(0)}P_k.
\end{equation}
Moreover, as $[H_0,\Omega_k^{(0)}]=0$, we have $
[H_0,Q_k\Omega_k^{(0)}P_k]=0$ and then $Q_k\Omega_k^{(0)}P_k=0$. Therefore, $\Omega_k^{(0)}=P_k$.

We now analyze the higher-order contributions. For $j\ge1$, we have $P_k\Omega_k^{(j)}=0$ and $\Omega_k^{(j)}P_k=\Omega_k^{(j)}$, and hence,
\begin{equation}
\Omega_k^{(j)}=Q_k\Omega_k^{(j)}P_k.
\end{equation}
Inserting $\Omega_k^{(0)}=P_k$, we get
\begin{align}
[H_0,\Omega_k^{(j)}] &=-V\Omega_k^{(j-1)}+\sum_{i=0}^{j-1}\Omega_k^{(i)}V\Omega_k^{(j-1-i)} \nonumber
\\
&=-Q_kV\Omega_k^{(j-1)}+\sum_{i=1}^{j-1}\Omega_k^{(i)}V\Omega_k^{(j-1-i)} \nonumber
\\
&=Q_k Y_k^{(j)}P_k \quad(j\geq 1).
\end{align}
We can now solve the Sylvester equation
\begin{equation}
[H_0,Q_k\Omega_k^{(j)}P_k]=Q_kY_k^{(j)}P_k\quad(j\geq 1)
\end{equation}
by using Lemma~\ref{lemma:sylvester} with $A=H_0 Q_k$, $B=H_0 P_k$, $X=\Omega_k^{(j)}$, and $Y=Q_k Y_k^{(j)}P_k$  as
\begin{equation}
\Omega_k^{(j)}
=Q_k\Omega_k^{(j)}P_k
=\int_{-\infty}^\infty\rmd t\,\rme^{-\rmi tH_0}Q_k Y_k^{(j)}P_k\rme^{\rmi tH_0}f(t)\quad(j\geq 1), 
\end{equation}
where $f(t)$ is any function in $L^1(\mathbb{R})$ such that 
\begin{equation}
\hat{f}(s)=\int_{-\infty}^\infty\rmd t\,f(t)\rme^{-\rmi st}=\frac{1}{s}\quad\text{for}\quad |s|\ge\eta,
\end{equation}
with $\eta$ the spectral gap of $H_0$ defined in Eq.~\eqref{eq:SpectralGap}.
This completes the proof.
\end{proof}

\section{Proof of Proposition~\ref{prop:close_to_id}}\label{sec:app2}
To begin with, we define
\begin{equation}
\Omega^{(j)} \equiv \sum_k\Omega_k^{(j)},
\end{equation}
for every $j=0,1,2,\dots$, and
\begin{equation}\label{eq:omega_omegaj}
\Omega=\sum_{j=0}^\infty\frac{1}{\gamma^j}\Omega^{(j)}.
\end{equation}
Analogously,
\begin{align}
     Y^{(j)} &= \sum_k Y_k^{(j)}   \nonumber\\
    &=
    - \sum_k V \Omega_k^{(j-1)}+ \sum_k\sum_{i=1}^{j-1}\Omega_k^{(i)}V\Omega_k^{(j-1-i)}  \nonumber\\
    &=
    -  V \sum_k \Omega_k^{(j-1)}+ \sum_k\sum_{i=1}^{j-1}\Omega^{(i)}P_kV\Omega^{(j-1-i)} P_k  \nonumber\\
    &=
    -V\Omega^{(j-1)}+\sum_{i=1}^{j-1}\Omega^{(i)}\sum_kP_kV\Omega^{(j-1-i)}P_k\quad(j\geq 1).
\end{align}
Note that $\Omega_k^{(j)} = \Omega^{(j)} P_k$ and $Y_k^{(j)} = Y^{(j)} P_k$. With these definitions, we can rephrase the perturbative solution of the Bloch equations~\eqref{eq:bloch-commutator-equ0} provided in Proposition 
\ref{prop:perturbative} as
\begin{equation} \label{eq:omega-j-sol-2}
\Omega^{(0)}=\openone,\qquad\Omega^{(j)}
=\mathop{\sum\sum}_{k\neq\ell}
\int_{-\infty}^\infty\rmd t\,\rme^{-\rmi tH_0}P_k Y^{(j)}P_\ell\rme^{\rmi tH_0}f(t)\quad(j\ge1),
\end{equation}
again with $f\in L^1(\mathbb{R})$ satisfying the condition~\eqref{eq:f(t)}.
\begin{lemma}\label{lem:omegaj}
    For every $j=0,1,2,\dots$, the following estimate holds,
    \begin{equation} \label{eq:omega-j-bound-by-catalan}
\|\Omega^{(j)}\|
\le\left(\frac{\pi}{\eta}\|V\|\right)^jC_j\quad(j\geq 0),
\end{equation}
where $C_j$ is the $j$th Catalan number,
\begin{equation}
C_j=\frac{1}{j+1}
\begin{pmatrix}
2j\\
j
\end{pmatrix}
\quad(j\geq 0).
\end{equation}
\end{lemma}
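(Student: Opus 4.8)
The plan is to prove the Catalan-number bound~\eqref{eq:omega-j-bound-by-catalan} by strong induction on $j$, using the recursive expression~\eqref{eq:omega-j-sol-2} for $\Omega^{(j)}$ together with the recursion $Y^{(j)}=-V\Omega^{(j-1)}+\sum_{i=1}^{j-1}\Omega^{(i)}\big(\sum_kP_kV\Omega^{(j-1-i)}P_k\big)$. The base case $j=0$ is immediate since $\Omega^{(0)}=\openone$ and $C_0=1$. For the inductive step, I would first extract the single factor $\pi/\eta$ coming from the integral over $f(t)$: the key input is that $\|\int_{-\infty}^\infty \e^{-\ii tH_0}P_k Z P_\ell \e^{\ii tH_0}f(t)\,\rmd t\|\le \|f\|_{\text{relevant}}\,\|Z\|$, where the relevant quantity is controlled by the fact that $\hat f(s)=1/s$ for $|s|\ge\eta$ and the spectral supports of $P_k H_0$ and $P_\ell H_0$ are at distance $\ge\eta$; an explicit admissible choice (e.g.\ built from a smooth cutoff) gives a constant $\le \pi/\eta$. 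I would isolate this bound as the one analytic ingredient and cite or quickly verify it, so that $\|\Omega^{(j)}\|\le \tfrac{\pi}{\eta}\|Y^{(j)}\|$ for $j\ge 1$ (the double sum over $k\neq\ell$ collapses because $\sum_{k\neq\ell}P_k(\cdot)P_\ell$ has norm $\le$ the norm of the full operator, each $P_k$ being a projection and the ranges orthogonal).

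Next I would bound $\|Y^{(j)}\|$ using the triangle inequality and submultiplicativity: $\|Y^{(j)}\|\le \|V\|\,\|\Omega^{(j-1)}\| + \sum_{i=1}^{j-1}\|\Omega^{(i)}\|\,\|V\|\,\|\Omega^{(j-1-i)}\|$, where I use that $\|\sum_k P_k V \Omega^{(j-1-i)}P_k\|\le \|V\|\,\|\Omega^{(j-1-i)}\|$ (again orthogonality of the blocks). Substituting the inductive hypothesis $\|\Omega^{(i)}\|\le (\pi\|V\|/\eta)^i C_i$ for all $i<j$ and combining with the factor $\pi\|V\|/\eta$ from the integral, the powers of $\pi\|V\|/\eta$ assemble to exactly $(\pi\|V\|/\eta)^j$, and I am left needing
\begin{equation}
C_{j-1} + \sum_{i=1}^{j-1} C_i C_{j-1-i} \;\le\; C_j .
\end{equation}
But the Catalan numbers satisfy the Segner recursion $C_j = \sum_{i=0}^{j-1} C_i C_{j-1-i}$, and since $C_0=1$ the term $i=0$ of that sum is exactly $C_{j-1}$; hence the left-hand side equals $C_j$ exactly, closing the induction.

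\textbf{Main obstacle.} The routine part is the combinatorial bookkeeping with the Catalan recursion, which works out cleanly; the genuinely substantive step is the norm estimate on the Sylvester-type integral, namely showing that the operator $\Omega^{(j)}$ defined by the integral against $f$ picks up a factor no worse than $\pi/\eta$. This requires choosing $f\in L^1(\mathbb{R})$ with $\hat f(s)=1/s$ on $|s|\ge\eta$ and with small $L^1$ norm (or, more precisely, controlling the operator norm of the resulting integral via the spectral representation of $H_0$ restricted to the two orthogonal blocks, as in Lemma~\ref{lemma:sylvester}), and then tracking that the best such constant is $\pi/\eta$. I expect this is where the bulk of the careful work lies; everything downstream is triangle inequality, submultiplicativity, block-orthogonality of the $P_k$, and the Segner recursion. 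I would therefore present the integral estimate first (possibly as a short self-contained sublemma), and then run the induction described above.
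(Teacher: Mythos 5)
Your overall strategy---strong induction on $j$ via the recursion for $Y^{(j)}$, the Sylvester-integral representation of $\Omega^{(j)}$, and the Segner recursion for the Catalan numbers---is exactly the route the paper takes, and your combinatorial bookkeeping (in particular the identity $C_{j-1}+\sum_{i=1}^{j-1}C_iC_{j-1-i}=C_j$) is correct. However, there is a genuine gap in the analytic step where you pass from $\Omega^{(j)}$ to $\|Y^{(j)}\|$: you claim that the off-diagonal compression satisfies $\|\sum_{k\neq\ell}P_kYP_\ell\|\le\|Y\|$ because the ranges of the $P_k$ are orthogonal. This is false as soon as there are $m\ge 3$ blocks. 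For instance, with three rank-one orthogonal projections onto the coordinate axes of $\mathbb{C}^3$ and $Y=J-\tfrac{3}{2}\openone$, where $J$ is the all-ones matrix, one has $\|Y\|=\tfrac{3}{2}$ (eigenvalues $\tfrac{3}{2},-\tfrac{3}{2},-\tfrac{3}{2}$), while the off-diagonal part is $J-\openone$, of norm $2$. The correct general estimate, and the one the paper uses, is $\|\sum_{k\neq\ell}P_kYP_\ell\|=\|Y-\sum_kP_kYP_k\|\le\|Y\|+\sup_k\|P_kYP_k\|\le 2\|Y\|$, since the pinching $Y\mapsto\sum_kP_kYP_k$ is a contraction. (Your other block estimate, $\|\sum_kP_kV\Omega^{(j-1-i)}P_k\|\le\|V\|\,\|\Omega^{(j-1-i)}\|$, is fine for precisely this reason.)

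Once the factor $2$ is restored, your integral constant ``some admissible $f$ with $\int|f|\le\pi/\eta$'' is no longer good enough: you would obtain $\|\Omega^{(j)}\|\le\tfrac{2\pi}{\eta}\|Y^{(j)}\|$ and hence only $\bigl(\tfrac{2\pi}{\eta}\|V\|\bigr)^jC_j$, a factor $2^j$ weaker than the claimed bound. The paper closes this by invoking the sharp value $\inf_{f\in\mathcal{F}_\eta}\int_{\mathbb{R}}|f(t)|\,\rmd t=\tfrac{\pi}{2\eta}$ over all admissible $f$ (Ref.~\cite[Theorem~VII.2.15]{bhatia_matrix_2013}), so that the product of the two constants is exactly $\tfrac{\pi}{\eta}$. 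So the ``main obstacle'' you flagged is indeed the substantive one, but the careful work is not merely exhibiting one convenient $f$: you need the optimal $L^1$ bound $\tfrac{\pi}{2\eta}$ \emph{together with} the factor-$2$ off-diagonal estimate. With those two corrections your induction closes and reproduces the paper's proof.
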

\begin{proof}
We begin by noticing that the Catalan numbers satisfy the recurrence relation,
\begin{equation}\label{eq:catalan_recurrence}
C_0=1,\qquad
C_{j+1}=\sum_{i=0}^jC_iC_{j-i}\quad(j\geq 0).
\end{equation}
Now, $Y^{(1)}$ is bounded by
\begin{equation}
\|Y^{(1)}\|
=\|V\|,
\end{equation}
and, for $j\ge2$,
\begin{align}
\|Y^{(j)}\|
&\le
\|V\|\|\Omega^{(j-1)}\|
+\sum_{i=1}^{j-1}\|\Omega^{(i)}\|
\left\|
\sum_kP_kV\Omega^{(j-1-i)}P_k
\right\|
\nonumber\\
&=
\|V\|\|\Omega^{(j-1)}\|
+\sum_{i=1}^{j-1}\|\Omega^{(i)}\|
\sup_k\|P_kV\Omega^{(j-1-i)}P_k\|
\nonumber\\
&\le
\|V\|\|\Omega^{(j-1)}\|
+\sum_{i=1}^{j-1}\|\Omega^{(i)}\|
\|V\|\|\Omega^{(j-1-i)}\|
\nonumber\\
&=
\|V\|\left(
\|\Omega^{(j-1)}\|
+\sum_{i=1}^{j-1}\|\Omega^{(i)}\|
\|\Omega^{(j-1-i)}\|
\right).
\end{align}
In addition,
\begin{align}
\left\|
\mathop{\sum\sum}_{k\neq\ell}P_kYP_\ell
\right\|
&=
\left\|
Y-\sum_kP_kYP_k
\right\|
\nonumber\\
&\le
\|Y\|
+\left\|
\sum_kP_kYP_k
\right\|
\nonumber\\
&=
\|Y\|
+\sup_k\|P_kYP_k\|
\nonumber\\
&\le
\|Y\|
+\|Y\|
\nonumber\\
&=
2\|Y\|.
\vphantom{\biggl\|}
\end{align}
We can now bound $\Omega^{(j)}$ as follows. For $j=0$,
\begin{equation}
\|\Omega^{(0)}\|=1.
\end{equation}
For $j\geq1$, we use Eq.~\eqref{eq:omega-j-sol-2} and the property (cf.~Ref.~\cite[Theorem~VII.2.15]{bhatia_matrix_2013}) 
\begin{equation}
\inf_{f\in \mathcal{F}_\eta}\int_{\mathbb{R}}\rmd t\,|f(t)| =\frac{\pi}{2\eta}\,, \qquad 
\mathcal{F}_\eta=\left\{f\in L^1(\mathbb{R})\,:\,  \hat{f}(s)=1/s, \,\text{for }|s|\geq \eta\right\}.
\end{equation}
Then, we obtain
\begin{align} \label{eq:omega-j-bound}
\|\Omega^{(j)}\|
&
\le
\inf_{f\in \mathcal{F}_\eta}\int_{-\infty}^\infty\rmd t
\left\|
\mathop{\sum\sum}_{k\neq\ell}
P_k Y^{(j)}P_\ell
\right\|
|f(t)|
\nonumber\\
&
\le
\inf_{f\in \mathcal{F}_\eta} 2\|Y^{(j)}\|
\int_{-\infty}^\infty\rmd t\,
|f(t)|
\nonumber\\
&
=
\frac{\pi}{\eta}\|Y^{(j)}\|
\nonumber\\
&
\le
\frac{\pi}{\eta}
\|V\|\left(
\|\Omega^{(j-1)}\|
+\sum_{i=1}^{j-1}\|\Omega^{(i)}\|
\|\Omega^{(j-1-i)}\|
\right)
\nonumber\\
&
=
\frac{\pi}{\eta}
\|V\|
\sum_{i=0}^{j-1}\|\Omega^{(i)}\|
\|\Omega^{(j-1-i)}\|
\quad(j\geq 1).
\end{align}
We now prove the bound~\eqref{eq:omega-j-bound-by-catalan} by induction. The zeroth order is true, as $\|\Omega^{(0)}\|=1$. Assume that the bound holds up to the $j$th order. Then, by Eq.~\eqref{eq:omega-j-bound},
    \begin{align}
        \|\Omega^{(j+1)}\|
        &\le \frac{\pi}{\eta} \|V\|
        \sum_{i=0}^{j}\|\Omega^{(i)}\| \|\Omega^{(j-i)}\| \nonumber \\
        &\le\frac{\pi}{\eta} \|V\| \sum_{i=0}^{j} 
        \left(\frac{\pi}{\eta}\|V\|\right)^iC_i
        \left(\frac{\pi}{\eta}\|V\|\right)^{j-i}C_{j-i}\nonumber \\
        &=\left(\frac{\pi}{\eta}\|V\|\right)^{j+1}\sum_{i=0}^{j}C_i C_{j-i}\nonumber\\
        &=\left(\frac{\pi}{\eta}\|V\|\right)^{j+1}C_{j+1},
    \end{align}
    where we used Eq.~\eqref{eq:catalan_recurrence} in the last step, thus concluding the induction and the proof.
\end{proof}

We can finally prove Proposition~\ref{prop:close_to_id}\@.
\begin{proof}[Proof of Proposition~\ref{prop:close_to_id}] 
We first recall the generating function for the Catalan numbers~\cite{miana_catalan_2023}
\begin{equation}
G(x)=\sum_{j=0}^\infty C_jx^j=\frac{1-\sqrt{1-4x}}{2x}.
\end{equation}
By expanding $\Omega$ as in Eq.~\eqref{eq:omega_omegaj} and by using Lemma~\ref{lem:omegaj}, we get
\begin{align}
\|\Omega-\openone\|
&=\left\|
\sum_{j=1}^\infty\frac{1}{\gamma^j}\Omega^{(j)}
\right\|
\nonumber\\
&\le
\sum_{j=1}^\infty\frac{1}{\gamma^j}\|\Omega^{(j)}\|
\nonumber\\
&\le
\sum_{j=1}^\infty
\left(\frac{\pi}{\gamma\eta}\|V\|\right)^jC_j
\nonumber\\
&=
G\!\left(\frac{\pi}{\gamma\eta}\|V\|\right)-1
\nonumber\\
&=
\delta\!\left(\frac{\|V\|}{\gamma\eta}\right),
\end{align}
where 
\begin{equation}
        \delta(x)=G(\pi x)-1 =\frac{(1-\sqrt{1-4\pi x})^2}{4\pi x}
    \end{equation}
is the function appearing in the statement of Proposition~\ref{prop:close_to_id}\@.

The last thing that remains to be proven is that as long as $\gamma>4\pi \|V\|/\eta$, then $\|\Omega-\openone\|<1$. As $1\geq\sqrt{1-4\pi\frac{\|V\|}{\gamma \eta}}>0$, so $\sqrt{1-4\pi\frac{\|V\|}{\gamma \eta}} \geq 1-4\pi\frac{\|V\|}{\gamma \eta}$, and equivalently for any finite $\gamma$ it is true that $1-\sqrt{1-4\pi\frac{\|V\|}{\gamma \eta}} < 4\pi\frac{\|V\|}{\gamma \eta}$. Rearanging this inequality completes the proof $\|\Omega-\openone\|<1$.
\end{proof}

\section{Proof of Inequalities~\eqref{eq:ineq_omega_1}--\eqref{eq:ineq_omega_3}}\label{sec:app3}
We here derive the inequalities~\eqref{eq:ineq_omega_1}--\eqref{eq:ineq_omega_3}, which are used in the main text to prove Theorem~\ref{thm:main}\@. Let $\Omega\in\mathcal{B}(\hilb)$ such that $\|\Omega-\openone\|\leq\delta<1$. Then, the inequality~\eqref{eq:ineq_omega_1} follows immediately by the triangle inequality as
\begin{equation}
\|\Omega\| = \|\openone + \Omega - \openone\| \leq \|\openone\| + \|\Omega - \openone\| = 1 + \delta.
\end{equation}
We next prove the inequality~\eqref{eq:ineq_omega_2}. To begin with, recall that, given $T\in\mathcal{B}(\hilb)$ such that $\|T\|<1$, the operator $\openone-T$ is invertible, and its inverse is given by a Neumann series,
\begin{equation}
(\openone-T)^{-1}=\sum_{n=0}^\infty T^n.
\end{equation}
Replacing $T$ with $\openone-\Omega$, we are sure that the operator $\openone-(\openone-\Omega)=\Omega$ is invertible and
\begin{equation}
    \Omega^{-1}=\sum_{n=0}^\infty(\openone-\Omega)^n,
\end{equation}
whence, using the triangle inequality and the submultiplicativity of the operator norm,
\begin{equation}
    \|\Omega^{-1}\|\leq\sum_{n=0}^\infty\|(\openone-\Omega)^n\|\leq\sum_{n=0}^\infty\|\openone-\Omega\|^n\leq\sum_{n=0}^\infty\delta^n=\frac{1}{1-\delta},
\end{equation}
which proves the inequality~\eqref{eq:ineq_omega_2}. Finally, since
\begin{equation}
    \Omega^{-1}-\openone=\sum_{n=1}^\infty(\openone-\Omega)^n,
\end{equation}
we have
\begin{equation}
\|\Omega^{-1}-\openone\| 
\leq\sum_{n=1}^\infty\|(\openone-\Omega)^n\|\leq\sum_{n=1}^\infty\|\openone-\Omega\|^n\leq\sum_{n=1}^\infty\delta^n = \frac{\delta}{1-\delta},
\end{equation}
which proves the inequality~\eqref{eq:ineq_omega_3}.

\section{Schrieffer--Wolff Construction}
\label{sec:swproof}
\begin{proposition}\label{prop:sw_transformation}
Let $H=\gamma H_0+V$ with $\gamma> 4 \pi\|V\|/\eta$, and let $\Omega$ be the Bloch wave operator given in Theorem~\ref{thm:main}\@. Then,
\begin{equation} \label{eq:W-def2}
    W= \Omega(\Omega^\dagger  \Omega)^{-1/2}
\end{equation}
is the unitary Schrieffer--Wolff transformation on the system~\cite[Definition~3.1]{bravyi_schriefferwolff_2011}, that is, $W=\sum_kW_k$, where $W_k$ is the direct rotation from the unperturbed spectral subspace $\hilb_k$ of $H_0$ with orthonormal projection $P_k$ to the corresponding perturbed one $\tilde{\hilb}_k$ of $H$.
\end{proposition}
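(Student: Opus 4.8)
The plan is to exploit the polar decomposition of the (bounded, boundedly invertible) operator $\Omega$ together with the block structure inherited from the Bloch equations. Since $\Omega$ is invertible for $\gamma>4\pi\|V\|/\eta$, the operator $\Omega^\dagger\Omega$ is bounded, strictly positive and invertible, so $|\Omega|:=(\Omega^\dagger\Omega)^{1/2}$ and $|\Omega|^{-1}$ are well-defined bounded positive operators and $W=\Omega|\Omega|^{-1}$ is exactly the unitary factor in $\Omega=W|\Omega|$; unitarity, $W^\dagger W=WW^\dagger=\openone$, is then immediate. What remains is to prove three things: (a) $\Omega^\dagger\Omega$ is block-diagonal, $[\Omega^\dagger\Omega,P_k]=0$ — this is the relation labelled Eq.~\eqref{eq:omegadaggeromegadiagonal}; (b) the orthogonal projections $\tilde P_k:=WP_kW^\dagger$ are the spectral projections of $H$ onto the perturbed subspaces $\tilde\hilb_k$; and (c) $W$ is the \emph{direct} rotation, i.e.\ it carries out no spurious in-block rotation.

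For (a) I would argue spectrally rather than manipulate $\Omega^\dagger\Omega$ directly. Put $\tilde\hilb_k:=\mathrm{ran}(\Omega P_k)$; as $\Omega$ is a topological isomorphism these are closed subspaces with $\bigoplus_k\tilde\hilb_k=\hilb$, and $H\Omega=\Omega H_{\rm Bloch}$ together with $[H_{\rm Bloch},P_k]=0$ (Proposition~\ref{prop:diagonalizes}) shows each $\tilde\hilb_k$ is $H$-invariant with $H|_{\tilde\hilb_k}$ similar to the block $(H_{\rm Bloch})_k:=P_kH_{\rm Bloch}P_k$. A short computation using $\Omega_k=P_k+Q_k\Omega P_k$ and $P_kH_0Q_k=0$ gives $H_{\rm Bloch}-\gamma H_0=\sum_k P_kV\Omega P_k$, whence $\|H_{\rm Bloch}-\gamma H_0\|\le(1+\delta)\|V\|<2\|V\|$ and therefore $\sigma(H|_{\tilde\hilb_k})=\sigma((H_{\rm Bloch})_k)\subseteq\gamma\sigma_k+\overline{D(0,2\|V\|)}$; since $4\|V\|<\gamma\eta$, these sets are pairwise disjoint. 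Thus $\{\tilde\hilb_k\}$ is an $H$-invariant decomposition of $\hilb$ into spectrally separated summands, and because $H$ is \emph{self-adjoint} the associated Riesz (contour) projections coincide with orthogonal spectral projections; hence the $\tilde\hilb_k$ are mutually orthogonal spectral subspaces of $H$. Orthogonality of $\mathrm{ran}(\Omega P_k)$ and $\mathrm{ran}(\Omega P_\ell)$ for $k\ne\ell$ is precisely $P_k\Omega^\dagger\Omega P_\ell=\Omega_k^\dagger\Omega_\ell=0$, i.e.\ $[\Omega^\dagger\Omega,P_k]=0$, and then $[(\Omega^\dagger\Omega)^{\pm1/2},P_k]=0$ by the functional calculus. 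I expect this step — in particular the assertion that a spectrally separated invariant decomposition of a self-adjoint operator must be its orthogonal spectral decomposition — to be the main obstacle; it should follow from a holomorphic-functional-calculus argument (the Riesz projection of a spectral island of a self-adjoint operator is self-adjoint) or from standard facts about reducing subspaces, but it deserves care.

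Once (a) is in hand, (b) is easy: from $\Omega=W|\Omega|$, $\Omega^{-1}=|\Omega|^{-1}W^\dagger$, and $[|\Omega|,P_k]=0$ one gets $\tilde P_k=WP_kW^\dagger=W|\Omega|P_k|\Omega|^{-1}W^\dagger=\Omega P_k\Omega^{-1}$, which is idempotent, self-adjoint (being a unitary conjugate of $P_k$), commutes with $H$ (again by $H\Omega=\Omega H_{\rm Bloch}$), and has range $\mathrm{ran}(\Omega P_k)=\tilde\hilb_k$; so $WP_k=\tilde P_kW$ and $W_k:=WP_k$ maps $\hilb_k$ unitarily onto $\tilde\hilb_k$, with $W=\sum_kW_k$. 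For (c) I would use that, by the perturbative construction [Eq.~\eqref{eq:omega-j-sol-2}, with $\Omega^{(0)}=\openone$ and every $\Omega^{(j\ge1)}$ purely off-diagonal], $G:=\Omega-\openone$ satisfies $P_kGP_k=0$ for all $k$. Writing $W=(\openone+G)D$ with $D:=(\Omega^\dagger\Omega)^{-1/2}$ block-diagonal and strictly positive, one finds $P_kWP_k=P_kDP_k+(P_kGP_k)D=P_kDP_k>0$ for every $k$. Thus $W$ is a unitary carrying $\{P_k\}$ to $\{\tilde P_k\}$ all of whose diagonal blocks $P_kWP_k$ are positive — the defining minimality property of the direct rotation (no in-block phase) — so, whenever the direct rotation is defined in the stated parameter range, $W=\sum_kW_k$ is the Schrieffer--Wolff transformation of Ref.~\cite[Definition~3.1]{bravyi_schriefferwolff_2011} and each $W_k$ is the direct rotation $\hilb_k\to\tilde\hilb_k$, as claimed.
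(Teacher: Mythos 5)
Your proof is correct, but it reaches the two key facts by a genuinely different route than the paper. For the block-diagonality $[\Omega^\dagger\Omega,P_k]=0$, the paper proceeds algebraically: it constructs the oblique projections $\tilde P_k=\Omega_k(\Omega_k^\dagger\Omega_k)^{-1}\Omega_k^\dagger$ explicitly, verifies by direct computation that they are Hermitian idempotents commuting with $H$ and satisfying $\tilde P_k\Omega_k=\Omega_k$, and then reads off $P_k\Omega^\dagger\Omega P_l=\Omega^\dagger\tilde P_k\tilde P_l\Omega=0$. You instead argue spectrally: the subspaces $\mathrm{ran}(\Omega P_k)$ are $H$-invariant with spectra confined to the disjoint sets $\gamma\sigma_k+\overline{D(0,2\|V\|)}$ (using $\|H_{\rm Bloch}-\gamma H_0\|\le(1+\delta)\|V\|$ and $\gamma\eta>4\pi\|V\|>4\|V\|$), so for self-adjoint $H$ they must coincide with the ranges of the (orthogonal) Riesz spectral projections and are therefore mutually orthogonal. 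Your route is somewhat heavier machinery but has the virtue of making explicit \emph{why} the ranges are orthogonal — namely spectral separation of $H$ itself — a point the paper's chain $\Omega^\dagger\tilde P_k\tilde P_l\Omega=0$ leaves implicit (mutual orthogonality of the $\tilde P_k$ is not spelled out in the paper's lemma). It also identifies $\tilde\hilb_k$ directly as a spectral subspace of $H$, which the paper only asserts. For the identification of $W$ with the direct rotation, the paper squares both $\sqrt{\tilde P_kP_k}$ and $\Omega_k(\Omega_k^\dagger\Omega_k)^{-1/2}$ and matches them, whereas you invoke the characterization of the direct rotation as the unique unitary intertwining $\{P_k\}$ with $\{\tilde P_k\}$ whose diagonal blocks $P_kWP_k$ are positive definite, verified via $P_k(\Omega-\openone)P_k=0$ and the strict positivity of $(\Omega^\dagger\Omega)^{-1/2}$. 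That uniqueness-by-positive-diagonal-blocks criterion (with the invertibility of $P_kWP_k$ you have) is a standard equivalent of Bravyi \emph{et al.}'s Definition 3.1, so this step is sound, though it leans on a characterization rather than the definition used in the paper.
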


In order to prove this proposition, we shall first introduce a family of operators $\{\tilde{P}_k\}$ that serve as a perturbed counterpart of the projections $\{P_k\}$ in the presence of a nonzero coupling.
\begin{lemma}
For every $k=1,\dots,m$, the operators $\tilde{P}_k$ defined by 
\begin{equation}
    \tilde{P}_k = \Omega_k (\Omega^\dagger_k \Omega_k)^{-1} \Omega^\dagger_k, \label{eq:E1}
\end{equation}
where the inverse is understood on the subspace $\hilb_k$, satisfy the following properties,
\begin{gather}
    \tilde{P}_k^\dagger = \tilde{P}_k,  \label{eq:E2} \\
    \tilde{P}_k^2 =  \tilde{P}_k, \label{eq:E3} \\
    [H,  \tilde{P}_k] = 0,\label{eq:E4} \\ 
     \tilde{P}_k \rightarrow P_k\quad\mathrm{as}\quad\gamma \rightarrow \infty.\label{eq:E5}
\end{gather}
\end{lemma}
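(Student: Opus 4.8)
The plan is to verify the four properties~\eqref{eq:E2}--\eqref{eq:E5} in turn, exploiting the Bloch equations~\eqref{eq:bloch_eq_1}--\eqref{eq:bloch_eq_3} and the fact that, for $\gamma>4\pi\|V\|/\eta$, the operator $\Omega$ (and hence each restriction $\Omega_k=\Omega P_k$, viewed as a map $\hilb_k\to\hilb$) is injective with closed range, so that $\Omega_k^\dagger\Omega_k$ is invertible on $\hilb_k$. The self-adjointness~\eqref{eq:E2} is immediate from the form $\tilde P_k=\Omega_k(\Omega_k^\dagger\Omega_k)^{-1}\Omega_k^\dagger$, since $(\Omega_k^\dagger\Omega_k)^{-1}$ is self-adjoint on $\hilb_k$. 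For idempotency~\eqref{eq:E3}, I would compute $\tilde P_k^2=\Omega_k(\Omega_k^\dagger\Omega_k)^{-1}(\Omega_k^\dagger\Omega_k)(\Omega_k^\dagger\Omega_k)^{-1}\Omega_k^\dagger=\Omega_k(\Omega_k^\dagger\Omega_k)^{-1}\Omega_k^\dagger=\tilde P_k$; here one must be careful that $(\Omega_k^\dagger\Omega_k)^{-1}\Omega_k^\dagger\Omega_k$ acts as the identity on $\hilb_k$, which holds because $\Omega_k^\dagger\Omega_k$ maps $\hilb_k$ to $\hilb_k$ (using $\Omega_kP_k=\Omega_k$ and $P_k\Omega_k^\dagger=\Omega_k^\dagger$).

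The key structural input for~\eqref{eq:E4} is the Bloch equation~\eqref{eq:bloch_eq_1}, $H\Omega_k=\Omega_kH\Omega_k$. Taking adjoints (with $H=H^\dagger$) gives $\Omega_k^\dagger H=\Omega_k^\dagger H\Omega_k^\dagger$ — wait, more carefully, $\Omega_k^\dagger H = (\Omega_k H\Omega_k)^\dagger = \Omega_k^\dagger H^\dagger\Omega_k^\dagger = \Omega_k^\dagger H\Omega_k^\dagger$; combining with $P_k\Omega_k=P_k$ and $\Omega_kP_k=\Omega_k$ one derives that $H$ maps the range of $\Omega_k$ into itself, and likewise that $H$ leaves $\ker\Omega_k^\dagger$ — or rather the relevant complementary structure — invariant. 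Concretely I would show $H\tilde P_k=\Omega_k H P_k(\Omega_k^\dagger\Omega_k)^{-1}\Omega_k^\dagger$ via $H\Omega_k=\Omega_kH\Omega_k=\Omega_k(H_{\rm Bloch})$ restricted appropriately, and similarly $\tilde P_k H$, then check the two agree; the cleanest route is to use $H\Omega_k=\Omega_k H_{\rm Bloch}P_k$ from~\eqref{eq:bloch_eq18} and its adjoint $\Omega_k^\dagger H = P_k H_{\rm Bloch}^\dagger\Omega_k^\dagger$, yielding $H\tilde P_k=\Omega_k H_{\rm Bloch}P_k(\Omega_k^\dagger\Omega_k)^{-1}\Omega_k^\dagger$ and $\tilde P_k H=\Omega_k(\Omega_k^\dagger\Omega_k)^{-1}P_k H_{\rm Bloch}^\dagger\Omega_k^\dagger$; since $\Omega_k^\dagger\Omega_k$ and $H_{\rm Bloch}$ both act within $\hilb_k$ and $H_{\rm Bloch}\big|_{\hilb_k}$ need not be self-adjoint, one must instead argue via $\Omega_k^\dagger\Omega_k\,(H_{\rm Bloch}P_k) = (P_k H_{\rm Bloch}^\dagger)\,\Omega_k^\dagger\Omega_k$, which follows from taking $\Omega_k^\dagger(\cdot)\Omega_k$ of the identity $H\Omega_k=\Omega_k H_{\rm Bloch}P_k$ together with its adjoint. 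This commutation of $\Omega_k^\dagger\Omega_k$ with $H_{\rm Bloch}$ on $\hilb_k$ is the heart of the argument and the step I expect to be the main obstacle, since it requires juggling the non-self-adjointness of $H_{\rm Bloch}$ carefully; it is essentially the statement that $\Omega_k^\dagger\Omega_k$ intertwines $H_{\rm Bloch}$ with its adjoint.

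For the limit~\eqref{eq:E5}, I would use Proposition~\ref{prop:close_to_id}: since $\|\Omega-\openone\|\le\delta(\|V\|/(\gamma\eta))\to0$ as $\gamma\to\infty$, we have $\Omega_k=\Omega P_k\to P_k$ and $\Omega_k^\dagger\to P_k$ in operator norm, hence $\Omega_k^\dagger\Omega_k\to P_k$ on $\hilb_k$, and by continuity of the (restricted) inverse near the identity, $(\Omega_k^\dagger\Omega_k)^{-1}\to P_k$; plugging into~\eqref{eq:E1} gives $\tilde P_k\to P_kP_kP_k=P_k$. One should phrase this with the quantitative bound $\|\tilde P_k-P_k\|=\mathcal{O}(\gamma^{-1})$ for definiteness, controlling $(\Omega_k^\dagger\Omega_k)^{-1}$ via a Neumann series as in Appendix~\ref{sec:app3}. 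Finally, with the lemma in hand, Proposition~\ref{prop:sw_transformation} follows by recognizing $W=\Omega(\Omega^\dagger\Omega)^{-1/2}$ as the polar-type factor: unitarity is a standard computation $W^\dagger W=(\Omega^\dagger\Omega)^{-1/2}\Omega^\dagger\Omega(\Omega^\dagger\Omega)^{-1/2}=\openone$ (using invertibility of $\Omega^\dagger\Omega$), and one identifies $W_k$ as the direct rotation taking $\hilb_k=P_k\hilb$ onto $\tilde\hilb_k=\tilde P_k\hilb$ by checking $W P_k W^\dagger=\tilde P_k$ — which reduces to $\Omega(\Omega^\dagger\Omega)^{-1}\Omega^\dagger$ restricted blockwise, matching~\eqref{eq:E1} once one shows $\Omega^\dagger\Omega$ is block-diagonal (i.e.\ $P_k\Omega^\dagger\Omega P_l=0$ for $k\ne l$), the identity labelled~\eqref{eq:omegadaggeromegadiagonal} referenced in the main text; that block-diagonality in turn follows from $[H,\tilde P_k]=0$ combined with $[H_0,P_k]=0$ and the closeness of $\Omega$ to $\openone$, or more directly from the Bloch equations by a short computation.
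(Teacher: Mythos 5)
Your proposal is correct and follows essentially the same route as the paper: Hermiticity and idempotency by direct computation, the commutation $[H,\tilde P_k]=0$ via the intertwining relation $\Omega_k^\dagger\Omega_k\,H_\mathrm{Bloch}=H_\mathrm{Bloch}^\dagger\,\Omega_k^\dagger\Omega_k$ obtained by sandwiching $H\Omega_k=\Omega_kH_\mathrm{Bloch}$ with its adjoint (the step you flagged as the ``main obstacle'' is exactly the paper's argument and goes through as you sketched), and the limit $\tilde P_k\to P_k$ from $\Omega_k\to P_k$. The extra material on $W$ and the block-diagonality of $\Omega^\dagger\Omega$ matches the paper's subsequent Proposition as well.
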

Therefore, $\tilde{P}_k$ are $\gamma$-dependent Hermitian projections with respect to which $H=\gamma H_0+V$ is block-diagonal, and they converge to the projections $P_k$ of the unperturbed Hamiltonian $H_0$ in the limit $\gamma\to\infty$. In simpler terms, for sufficiently large $\gamma$, 
the spectrum of the perturbed Hamiltonian $H$ is continuously deformed, and the operators $\tilde{P}_k$ are exactly the projections corresponding to the deformed spectral bands.
\begin{proof}
    The fact~\eqref{eq:E2} that all projections are Hermitian follows directly from their definition~\eqref{eq:E1}. In addition,
    \begin{equation}
        \tilde{P}_k^2 = \Omega_k (\Omega^\dagger_k \Omega_k)^{-1} (\Omega^\dagger_k\Omega_k) (\Omega^\dagger_k \Omega_k)^{-1} \Omega^\dagger_k = \Omega_k (\Omega^\dagger_k \Omega_k)^{-1} \Omega^\dagger_k = \tilde{P}_k.
    \end{equation}
    Equivalently to $H \Omega_k = \Omega_k H_\mathrm{Bloch}$ in Eq.~\eqref{eq:bloch_eq18}, by taking adjoints we get $\Omega_k^\dagger H = H_\mathrm{Bloch}^\dagger \Omega_k^\dagger $, so 
    \begin{equation}
        \Omega_k^\dagger \Omega_k H_\mathrm{Bloch} = \Omega_k^\dagger H \Omega_k = H_\mathrm{Bloch}^\dagger \Omega_k^\dagger \Omega_k,
    \end{equation}
    and 
    \begin{equation}
        H_\mathrm{Bloch} (\Omega_k^\dagger \Omega_k)^{-1} =  (\Omega_k^\dagger \Omega_k)^{-1} H_\mathrm{Bloch}^\dagger.
    \end{equation}
    Putting these equations together, we can prove the commutativity~\eqref{eq:E4},
    \begin{align}
        H \tilde{P}_k &= H \Omega_k (\Omega^\dagger_k \Omega_k)^{-1} \Omega^\dagger_k \nonumber \\
         &= \Omega_k H_\mathrm{Bloch} (\Omega_k^\dagger \Omega_k)^{-1}  \Omega^\dagger_k \nonumber \\
         &= \Omega_k (\Omega_k^\dagger \Omega_k)^{-1} H_\mathrm{Bloch}^\dagger  \Omega^\dagger_k \nonumber \\
         &=\Omega_k (\Omega^\dagger_k \Omega_k)^{-1} \Omega^\dagger_k H \nonumber \\
         &= \tilde{P}_k H.
    \end{align}
    Finally, the last equality follows from the fact that $\Omega_k$ approaches $P_k$ in the limit $\gamma \rightarrow \infty$.
\end{proof}

This lemma has two important consequences. First, it shows that the Bloch wave operator $\Omega_k$ links the perturbed projection $\tilde{P}_k$ with the unperturbed one $P_k$ via the equality $\tilde{P}_k \Omega_k = \Omega_k P_k = \Omega_k$. Secondly, it proves that, as long as the spectral gap remains finite, the operator $\Omega^\dagger \Omega$ is block-diagonal. This can be seen as follows: $P_k \Omega^\dagger \Omega P_l = \Omega^\dagger \tilde{P}_k\tilde{P}_l\Omega = 0 $, i.e. 
\begin{equation}\label{eq:omegadaggeromegadiagonal}
    \Omega^\dagger \Omega = \sum_k P_k \Omega^\dagger \Omega P_k.
\end{equation}

\begin{proof}[Proof of Proposition~\ref{prop:sw_transformation}]
The unitarity of the Schrieffer--Wolff transformation follows trivially from the definition~\eqref{eq:W-def} and the invertibility of $\Omega$,
\begin{align}
    W^\dagger W &=  [(\Omega^\dagger  \Omega)^{-1/2}]^\dagger \Omega^\dagger  \Omega(\Omega^\dagger  \Omega)^{-1/2} \nonumber \\
     &=  (\Omega^\dagger  \Omega)^{-1/2} (\Omega^\dagger  \Omega)^{1/2}(\Omega^\dagger  \Omega)^{1/2}(\Omega^\dagger  \Omega)^{-1/2} \nonumber \\
     &= \openone, \\
     W W^\dagger &= \Omega(\Omega^\dagger  \Omega)^{-1/2} [(\Omega^\dagger  \Omega)^{-1/2}]^\dagger \Omega^\dagger \nonumber \\
     &=\Omega (\Omega^\dagger  \Omega)^{-1} \Omega^\dagger \nonumber \\
     &= \Omega \Omega^{-1} (\Omega^\dagger) ^{-1} \Omega^\dagger \nonumber \\
     & = \openone.   
\end{align}
Following Ref.~\cite[Definitions~2.2 and~3.1]{bravyi_schriefferwolff_2011}, the Schrieffer--Wolff transformation of a subspace can be defined as the direct rotation from $P_k$ to $\tilde{P}_k$ by
\begin{equation}
    W_{k} =  \sqrt{\tilde{P}_k P_k}.
\end{equation}
We will show that this expression is equivalent to $\Omega_k (\Omega^\dagger_k \Omega_k)^{-1/2}$, where the inverse is understood on the subspace $P_k$. By squaring both expressions, we get
\begin{align}
    \left(\sqrt{\tilde{P}_k P_k}\right)^2 & = \Omega_k (\Omega^\dagger_k \Omega_k)^{-1} \Omega^\dagger_k P_k \nonumber \\
    & = \Omega_k (\Omega^\dagger_k \Omega_k)^{-1}, \\
    \Omega_k (\Omega^\dagger_k \Omega_k)^{-1/2} \Omega_k (\Omega^\dagger_k \Omega_k)^{-1/2}&= \Omega_k (\Omega^\dagger_k \Omega_k)^{-1/2} P_k \Omega_k (\Omega^\dagger_k \Omega_k)^{-1/2} \nonumber \\
    &= \Omega_k (\Omega^\dagger_k \Omega_k)^{-1/2} (\Omega^\dagger_k \Omega_k)^{-1/2} \nonumber \\
    &= \Omega_k (\Omega^\dagger_k \Omega_k)^{-1},
\end{align}
whence
\begin{align}
    \sum_k W_k &= \sum_k \Omega_k (\Omega^\dagger_k \Omega_k)^{-1/2}  \nonumber \\
    &= \sum_k \Omega P_k (P_k \Omega^\dagger \Omega P_k)^{-1/2} \nonumber \\
    &= \Omega (\Omega^\dagger \Omega)^{-1/2},
\end{align}
where the last equality follows from the block-diagonality~\eqref{eq:omegadaggeromegadiagonal}. This concludes the proof.
\end{proof}

\section{Proof of Inequalities~\eqref{eq:omegadaggeromegabound1}--\eqref{eq:omegadaggeromegabound3}} \label{sec:appSWbounds}
 Here, we prove the inequalities~\eqref{eq:omegadaggeromegabound1}--\eqref{eq:omegadaggeromegabound3}, necessary for proving Theorem~\ref{thm:sw-distance} in the main text. We begin by showing the equivalence of the inequalities $\gamma>\frac{2\pi}{\sqrt{2}-1}\frac{\|V\|}{\eta}$ and $\delta<\sqrt{2}-1$.
Introducing the scaling factor $\epsilon\equiv \frac{\pi \|V\|}{\gamma \eta} \geq 0$, we see 
\begin{align} \label{eq:F1}
    \gamma>\frac{2\pi}{\sqrt{2}-1}\frac{\|V\|}{\eta} &\iff \frac{\sqrt{2}-1}{2} > \epsilon \iff 
    0 > 2 \epsilon + (1-\sqrt{2}) \iff 0 > 4 \epsilon[2 \epsilon + (1-\sqrt{2})]  \nonumber \\
    &\iff 0> 8 \epsilon^2 + 4 \epsilon -4\sqrt{2}\epsilon \iff -4\epsilon > 8 \epsilon^2-4\sqrt{2}\epsilon  \iff
    1-4\epsilon > 1+8 \epsilon^2-4\sqrt{2}\epsilon=(1-2\sqrt{2}\epsilon)^2.
\end{align}
For any given $0<\epsilon<\frac{\sqrt{2}-1}{2}<\frac{1}{4}$, the last inequality in Eq.~\eqref{eq:F1} is equivalent to 
\begin{equation}
    \sqrt{1-4\epsilon}> 1-2\sqrt{2}\epsilon \iff \frac{1-\sqrt{1-4\epsilon}}{2 \epsilon} < \sqrt{2}.
\end{equation}
Recalling the definition~\eqref{eq:delta} of $\delta \equiv \frac{1-\sqrt{1-4\epsilon}}{2 \epsilon} -1$, we see that the two conditions on $\gamma$ and on $\delta$ are equivalent. Similar argument shows that $\epsilon<1/4$ implies $\delta \geq 0$.

To prove the inequality~\eqref{eq:omegadaggeromegabound1}, first recall that $\|\Omega - \openone\|= \|\Omega^\dagger - \openone\|<\delta$ and $\|\Omega\|=\|\Omega^\dagger\|<1+\delta$. By using the triangle inequality,
\begin{align} \label{eq:appF3}
     \|\Omega^\dagger \Omega - \openone\| &= \|\Omega^\dagger (\Omega-\openone)+(\Omega^\dagger- \openone)\| \nonumber \\
     &\leq  \|\Omega^\dagger \| \|\Omega-\openone\|+\|\Omega^\dagger- \openone\| \nonumber \\
     &\leq 2\delta+\delta^2.      
\end{align}
From the condition $0\leq \delta < \sqrt{2}-1$ follows $\|\Omega^\dagger \Omega - \openone\|<1$. Hence, $\Omega^\dagger \Omega=\openone +(\Omega^\dagger \Omega - \openone)$ must be invertible, the spectrum $\sigma(\Omega^\dagger \Omega)$ lies in the positive interval $[1-(2+\delta)\delta, 1+(2+\delta)\delta]$, and the inverse is given by a Neumann series, with the spectral norm of the square root satisfying 
\begin{align}
    \|(\Omega^\dag\Omega)^{-1/2}\|
    &=\|[\openone+(\Omega^\dag\Omega-\openone)]^{-1/2}\|
    \nonumber\\
    &\le\Bigl(1-\|\Omega^\dag\Omega-\openone\|\Bigr)^{-1/2}
    \nonumber\\
    &\le(1-2\delta-\delta^2)^{-1/2}.
\end{align}
Similarly,
\begin{align}
    \|(\Omega^\dag\Omega)^{-1/2}-\openone\|
    &=\|[\openone+(\Omega^\dag\Omega-\openone)]^{-1/2}-\openone\|
    \nonumber\\
    &\le\Bigl(1-\|\Omega^\dag\Omega-\openone\|\Bigr)^{-1/2}-1
    \nonumber\\
    &\le(1-2\delta-\delta^2)^{-1/2}-1.
\end{align}
which is the inequality~\eqref{eq:omegadaggeromegabound3} that was to be proven.

\section{Proof of Eq.~\eqref{eq:approx}}\label{sec:app4}
Here, we show that, for any $\psi\in\hilb$, the approximation property~\eqref{eq:approx} holds with $H=\gamma H_0+V$, $H^{(n)}=\gamma H_0^{(n)}+V$, and $H_0^{(n)}$ as given by Eq.~\eqref{eq:truncated}. Here, $H_0$ is an unbounded self-adjoint operator, densely defined on some domain $\operatorname{Dom} H_0\subset\hilb$; since $V=V^\dag$ is bounded, $H$ is also an unbounded self-adjoint operator with domain $\operatorname{Dom} H=\operatorname{Dom} H_0$. Instead, for every $n$, $H_0^{(n)}$ and $H^{(n)}$ are bounded operators defined on the whole Hilbert space $\hilb$. The following property holds by definition of $H_0^{(n)}$,
\begin{equation}
    \lim_{n\to\infty}\|H_0^{(n)}\psi-H_0\psi\|=0,\quad\text{for all}\ \psi\in\Dom H_0,
\end{equation}
and therefore
\begin{equation}\label{eq:approx_strong}
    \lim_{n\to\infty}\|H^{(n)}\psi-H\psi\|=0,\quad\text{for all}\ \psi\in\Dom H_0.
\end{equation}
Since $H$ is unbounded, Eq.~\eqref{eq:approx_strong} is generally \textit{not} sufficient by itself to prove Eq.~\eqref{eq:approx}. 

However, in our case, we can make use of the following sufficient condition for dynamical convergence (cf.~Ref.~\cite[Propositions~10.1.8 and~10.1.18]{deoliveira_intermediate_2009}.
\begin{proposition}\label{prop:sufficient_src}
    Let $(A_n)_{n\in\mathbb{N}}$ and $A$ be self-adjoint operators on $\hilb$. Assume that the following properties hold:
    \begin{enumerate}
        \item $\Dom A\subset\Dom A_n$ for all $n\in\mathbb{N}$;
        \item $A_n\psi\to A\psi$ for all $\psi\in\Dom A$.
        \end{enumerate}
        Then, $A_n$ converges to $A$ in the strong dynamical sense, i.e.,
        \begin{equation}
            \lim_{n\to\infty}\|(\e^{-\ii tA_n}-\e^{-\ii tA})\psi\|=0,\quad\text{for all}\ t\in\mathbb{R},\ \psi\in\hilb.
        \end{equation}
\end{proposition}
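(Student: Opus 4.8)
The plan is to deduce the strong dynamical convergence from \emph{strong resolvent convergence} of $A_n$ to $A$, and then to invoke the classical Trotter--Kato theorem which, for self-adjoint operators, is precisely the content of Propositions~10.1.8 and~10.1.18 of Ref.~\cite{deoliveira_intermediate_2009}: it asserts that the strong convergence of a single resolvent is equivalent to $\e^{-\ii tA_n}\to\e^{-\ii tA}$ strongly, with convergence uniform for $t$ in compact sets. Thus the whole argument reduces to establishing that, say, $(A_n-\ii)^{-1}\to(A-\ii)^{-1}$ strongly on $\hilb$.

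To prove this, I would fix $\psi\in\hilb$ and set $\phi=(A-\ii)^{-1}\psi$; since $A$ is self-adjoint, $A-\ii$ is a bijection of $\Dom A$ onto $\hilb$, so $\phi\in\Dom A$, and by the first hypothesis also $\phi\in\Dom A_n$ for every $n$, so that $(A_n-\ii)\phi$ and $(A-A_n)\phi$ are well defined. Writing $\psi=(A-\ii)\phi=(A_n-\ii)\phi+(A-A_n)\phi$ and applying $(A_n-\ii)^{-1}$ yields the second resolvent identity in the form
\begin{equation}
(A_n-\ii)^{-1}\psi-(A-\ii)^{-1}\psi=(A_n-\ii)^{-1}(A-A_n)\phi .
\end{equation}
Using the elementary bound $\|(A_n-\ii)^{-1}\|\le 1$, valid for any self-adjoint operator, we obtain
\begin{equation}
\|(A_n-\ii)^{-1}\psi-(A-\ii)^{-1}\psi\|\le\|(A_n-A)\phi\|\longrightarrow 0
\end{equation}
as $n\to\infty$, by the second hypothesis. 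The same computation with $-\ii$ in place of $\ii$ gives $(A_n+\ii)^{-1}\to(A+\ii)^{-1}$ strongly, which is more than enough to trigger the cited equivalence and conclude.

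The one genuinely delicate point is conceptual rather than technical: because the $A_n$ are unbounded, one cannot pass directly from the pointwise convergence $A_n\phi\to A\phi$ on $\Dom A$ to convergence of the unitary groups, since exponentiation is discontinuous in any naive operator topology on unbounded operators. The device that sidesteps this is exactly the insertion of the bounded, norm-$\le 1$ resolvents $(A_n\mp\ii)^{-1}$, which absorb the unboundedness and convert the second hypothesis into a bona fide norm estimate; the remaining implication ``strong resolvent convergence $\Rightarrow$ strong dynamical convergence'' is standard and may be quoted verbatim from the cited reference. If one prefers a self-contained route, this last step can instead be proved by a Duhamel identity $\e^{-\ii tA_n}\psi-\e^{-\ii tA}\psi=\ii\int_0^t \e^{-\ii(t-s)A_n}(A_n-A)\e^{-\ii sA}\psi\,\rmd s$ on the dense set $\Dom A$, bounding the integrand by a dominating function (obtained from a uniform-boundedness argument applied to the maps $A_n$ on $\Dom A$ equipped with the graph norm), and then extending to all of $\hilb$ by an $\varepsilon/3$ estimate using $\|\e^{-\ii tA_n}\|=1$.
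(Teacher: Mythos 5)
Your argument is correct and is exactly the route the paper takes: the paper simply cites Propositions~10.1.8 and~10.1.18 of de~Oliveira, which are, respectively, the equivalence of strong resolvent and strong dynamical convergence and the sufficient condition for strong resolvent convergence that you reprove via the second resolvent identity and the bound $\|(A_n\mp\ii)^{-1}\|\le 1$. Your write-up merely fills in the standard proof that the paper delegates to the reference, so there is nothing to add.
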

In our specific problem, it is clear that the truncated operators $H^{(n)}$ satisfy both assumptions of Proposition~\ref{prop:sufficient_src}\@. Indeed, $\Dom H^{(n)}=\hilb$ for all $n$, so that the first condition is obvious; the second condition coincides with Eq.~\eqref{eq:approx_strong}. Therefore, $H^{(n)}$ converges to $H$ in the strong 

dynamical sense, which is equivalent to Eq.~\eqref{eq:approx}.

\end{widetext}

\bibliography{CoarseGrainedBloch}

\end{document}